\newtheorem{remark}{Remark}
\newtheorem{lemma}{Lemma}
\newtheorem{proof}{Proof}
\newtheorem{proposition}{Proposition}
\newtheorem{corollary}{Corollary}
\newtheorem{theorem}{Theorem}
\newtheorem{assumption}{Assumption}
\title{Partition-based Distributed Kalman Filter with plug and play features}
\author{Marcello Farina\thanks{Dipartimento di Elettronica, Informazione e Bioingegneria, Politecnico di Milano, via Ponzio 34/5, 20133 Milan, Italy}, Ruggero Carli\thanks{Department of Information Engineering University of Padova, Via Gradenigo 6/B, 35131, Padova, Italy} %
}
\begin{document}

\maketitle

\begin{abstract}                          
In this paper we propose a novel partition-based distributed state estimation scheme for non-overlapping subsystems based on Kalman filter. The estimation scheme is designed in order to account, in a rigorous fashion, for dynamic coupling terms between subsystems, and for the uncertainty related to the state estimates performed by the neighboring subsystems. The online implementation of the proposed estimation scheme is scalable, since it involves (\emph{i}) small-scale matrix operations to be carried out by the estimator embedded in each subsystem and (\emph{ii}) neighbor-to-neighbor transmission of a limited amount of data. We provide theoretical conditions ensuring the estimation convergence. Reconfigurability of the proposed estimation scheme is allowed in case of plug and play operations. Simulation tests are provided to illustrate the effectiveness of the proposed algorithm.
\end{abstract}

\section{Introduction}
In many different engineering areas there has been, in the last years, a huge effort to develop algorithms and protocols allowing a number of interconnected, possibly spatially distributed systems, devices, sensors, and actuators, to operate cooperatively and to possess self-organization capabilities. Notable examples include smart grids \cite{Resende11}, environmental monitoring systems \cite{EMMON11}, large-scale irrigation and hydraulic networks \cite{Cantoni07,MaeDoa:12-007}, and multi-robot/vehicle systems \cite{MartinezCortesBullo07,Murray07}.\\
Related research on systems of systems \cite{Samad2011} or cyber-physical systems \cite{Antsaklis2013} is nowadays fostered, pursuing several challenges, including the design of hierarchical and distributed monitoring and control systems with reliability and robustness properties with respect to uncertainties, changing environment, communication failures, etc.\\
In particular, theoretically sound distributed monitoring and state estimation methods are necessary to allow for optimal managing of sensor networks. As also discussed in the survey paper~\cite{Sijs_Lazar_et_al_2008}, two main classes of estimation techniques for distributed smart sensing schemes are presently under investigation. They are generally both referred, in the literature, to as \emph{distributed state-estimation} algorithms. While a widely-considered problem concerns the case where the full state of the system is estimated by all subsystems, e.g., based on consensus and diffusion strategies, e.g., \cite{Saber07CDC,diffusion_d_est,TACFFS09}, in this paper we focus on \emph{partition-based estimation}. The latter consists of estimating, for each sensor, only a part of the state vector of a system, using information transmitted by other neighboring sensors. This problem gives rise to low-order estimation problems solved in a distributed way, and is particularly useful when the observed systems are large scale ones, e.g., power networks~\cite{Siljac78,FP-RC-FB:12}, transport networks~\cite{s1978}, process plants~\cite{Vadigepalli2003}, and robot fleets~\cite{Mutambara_Durrant-Whyte2000}.\\
Concerning linear discrete-time systems, recent contributions include \cite{Vadigepalli2003,Khan2008,Stankovic09,Farina2010,Farina2011b,Negenborn-Kalman13,Haber13,Riverso2013b,Riverso2013e}.\\ Among these, \cite{Vadigepalli2003,Khan2008,Stankovic09,Negenborn-Kalman13} propose Kalman filter-based estimation schemes suitable for systems affected by stochastic noise. The papers \cite{Vadigepalli2003,Stankovic09} propose methods based on local Kalman prediction equations (and neglect the dynamic interconnection terms) and on consensus steps to account for possible overlapping states between pairs of subsystems. The paper \cite{Khan2008} proposes a two-step Kalman filter, where the correction step is performed by each subsystem based on local measurements, while the prediction step is based on
approximating the centralized error process using a distributed iterate-collapse inversion algorithm for $L$-banded matrices \cite{KhanMouraDICI2008}. As in \cite{Vadigepalli2003,Stankovic09}, a consensus step is used to optimally account for overlapping states. Finally, in \cite{Negenborn-Kalman13}, a prediction/corrector-based method for multi-rate systems is proposed. It is worth noting that sufficient convergence conditions are provided just in \cite{Stankovic09} which, in case of non-overlapping subsystems, basically amount to the stability of the original system.\\
The papers \cite{Farina2010,Farina2011b,Riverso2013b,Riverso2013e} assume that the system is affected by bounded noise and guarantee, under suitable conditions, convergence of the estimator and the fulfillment of constraints on local states, e.g., in \cite{Farina2010}, or estimation errors, e.g., in \cite{Farina2011b,Riverso2013b,Riverso2013e}. Finally, \cite{Haber13} proposes an approximated distributed filter based on the moving horizon estimator studied in \cite{ABB03}. A different - cooperative and iterative - approach based on Lagrange decomposition is proposed in \cite{Georges20141451}, where continuous-time systems are considered.\\
The conditions required for convergence of the estimators discussed in all the mentioned papers, where available, (with the notable exception of \cite{Riverso2013e}) require a centralized synthesis/analysis phase which (\emph{i}) limits the application to very large-scale systems and (\emph{ii}) requires a complete re-design in case of configuration changes (e.g., addition/removal of subsystems or sensors). On the other hand, in \cite{Riverso2013e} the design phase (guaranteeing global properties) is distributed, i.e., the state estimator embedded in each subsystem is devoted to solve a local design problem. This has paved the way to a plug-and-play (PnP) implementation \cite{StoustrupEJC,RiversoFarinaGFT_PnP13}, which confers flexibility, reconfigurability, and reliability to the estimation architecture.\\
In this paper we propose a novel partition-based distributed state estimation scheme based on Kalman filter (denoted DKF) for non-overlapping subsystems affected by stochastic noise. The estimation scheme is designed to account for dynamic coupling terms between subsystems, and for the uncertainty related to the state estimates performed by the neighboring subsystems. This is done in a conservative but rigorous way by means of suitable covariance matrix bounds. The online implementation of the proposed estimation scheme is scalable, since it involves (\emph{i}) small-scale matrix operations to be carried out by the estimator embedded in each subsystem and (\emph{ii}) neighbor-to-neighbor transmission of a limited amount of data. Concerning the design/analysis phase, we provide both centralized (both with suitable linear matrix inequalities and with aggregate small gain-type arguments) and distributed scalable conditions to be verified ensuring the estimation convergence. The latter are then used to provide a fully distributed and PnP implementation of DKF. More specifically, distributed reconfigurability conditions are provided in case a subsystem is added to or removed from the network, and also in case PnP operations involve sensors.

The paper is structured as follows. In Section \ref{sec:statement} we introduce and motivate the distributed Kalman filter equations, while in Section \ref{sec:convergence} we provide the main conditions for convergence. In Section \ref{sec:d-design} we discuss how DKF can be designed in a distributed fashion and the resulting application for PnP operations. Finally, in Section \ref{sec:Exs} the algorithm is tested both on an academic example and on a benchmark case study, and in Section \ref{sec:conclusions} some conclusions are drawn. All proofs are postponed to Appendix A for better readability.\\
\noindent
\textbf{Notation}\\
The symbols $\geq$ and $>$ are used to denote semi-definite positive matrices and definite positive matrices, respectively. The symbols $\mathcal{R}$ and $\mathcal{L}$ are used for brevity to denote the Riccati equation update and the optimal Kalman predictor gain, respectively, i.e.,
$$\begin{array}{lcl}
\mathcal{R}(P,A,C,Q,R)&=&A PA^T-APC^T(CPC^T+R)^{-1}{C} PA^T + Q\\
\mathcal{L}(P,A,C,R)&=&A P C^T(C P C^T+R)^{-1}
\end{array}$$
where $P$, $A$, $C$, $Q$, and $R$ are matrices of appropriate dimensions. Finally, the cardinality of a set $\mathcal{N}$ is denoted with~$|\mathcal{N}|$ and the spectral radius of matrix $A$ is denoted $\sigma(A)$.
\section{The distributed Kalman filter}
\label{sec:statement}
\subsection{Statement of the problem}
Consider $M$ interconnected systems, each described by the following equations:
\begin{equation}
\label{eq:subsystems00}
\begin{array}{ll}
x_i(k+1)&=A_{ii}x_i(k)+\sum_{j\neq i}A_{ij}x_j(k)+w_i(k)\\
y_i(k)&=C_{i}x_i(k)+v_i(k)
\end{array}
\end{equation}
where $x_i(k),w_i(k)\in\mathbb{R}^{n_i}$ and $y_i(k),v_i(k)\in\mathbb{R}^{p_i}$.
We assume that $w_i(k)$ and $v_i(k)$ are zero-mean white noises, for all $i=1,\dots,M$, and that $\mathbb{E}\{w_i(k)w_j^T(k)\}=Q_{i}\delta_{ij}$, $\mathbb{E}\{v_i(k)v_j(k)\}=R_i\delta_{ij}$ (with $R_i>0$ for all $i=1,\dots,M$), and that $\mathbb{E}\{w_i(k)v_j^T(h)\}=0$ for all $i,j=1,\dots,M$, and $h,k\geq 0$. For $i \in \left\{1,\ldots, M\right\}$, we define with $\mathcal{N}_i$ the set of neighbors (also denoted predecessors in \cite{RiversoFarinaGFT_PnP13}) of subsystem $i$ defined as $\mathcal{N}_i=\left\{j \, |\, A_{ij}\neq 0\right\}$ while $\mathcal{S}_i$ is the set of successors of subsystem $i$ defined as $\mathcal{S}_i=\left\{j \, |\, i \in \mathcal{N}_j \right\}$. In our setup we assume that subsystem $i$ can exchange information with its neighbors. Note that $i$ is in general included in $\mathcal{S}_i$ and $\mathcal{N}_i$.\\
Collectively, if we define the variables $\mathbf{x}(k)=(x_1(k),\dots,x_M(k))$, $\mathbf{y}(k)=(y_1(k),\dots,y_M(k))$, $\mathbf{w}(k)=(w_1(k),\dots,w_M(k))$, and $\mathbf{v}(k)=(v_1(k),\dots,v_M(k))$, we can rewrite \eqref{eq:subsystems00} as
\begin{equation}
\label{eq:system00}
\begin{array}{ll}
\mathbf{x}(k+1)&=\mathbf{A}\mathbf{x}(k)+\mathbf{w}(k)\\
\mathbf{y}(k)&=\mathbf{C}\mathbf{x}(k)+\mathbf{v}(k)
\end{array}
\end{equation}
where $\mathbf{C}=$diag$(C_1,\dots,C_M)$, $\mathbf{Q}=$diag$(Q_1,\dots,Q_M)$, $\mathbf{R}=$diag$(R_1,\dots,R_M)$, and\\
$$\mathbf{A}=\begin{bmatrix}A_{11}&\dots&A_{1M}\\
\vdots&\ddots&\vdots\\
A_{M1}&\dots&A_{MM}\end{bmatrix}.$$\\
The optimal centralized Kalman predictor \cite{GoodwinRiccati-84} for system \eqref{eq:system00} is
\begin{equation}
\label{eq:predictor00}
\hat{\mathbf{x}}_c(k+1)=\mathbf{A}\hat{\mathbf{x}}_c(k)+\mathbf{L}_c(k)(\mathbf{y}(k)-\mathbf{C}\hat{\mathbf{x}}_c(k))
\end{equation}
where $\hat{\mathbf{x}}_c(k)$ denotes the one-step optimal predictor of $\mathbf{x}(k)$. According to the classical Kalman prediction theory, the optimal gain is
\begin{equation}
\mathbf{L}_c(k)=\mathcal{L}(\Pi_c(k),\mathbf{A},\mathbf{C},\mathbf{R})
\label{eq:riccati_centrgain}\end{equation}
where $\Pi_c(k)$ is the centralized Kalman prediction error covariance matrix and is computed iteratively using the Riccati equation
\begin{equation}\begin{array}{l}
\Pi_c(k+1)=\mathcal{R}(\Pi_c(k),\mathbf{A},\mathbf{C},\mathbf{Q},\mathbf{R})\\
\quad=(\mathbf{A}-\mathbf{L}(k)\mathbf{C})\Pi_c(k)(\mathbf{A}-\mathbf{L}(k)\mathbf{C})^T+\mathbf{Q}+\mathbf{L}(k)\mathbf{R}\mathbf{L}(k)^T
\end{array}\label{eq:riccati_centr00}\end{equation}
\subsection{Distributed prediction scheme}
As clear from \eqref{eq:predictor00}-\eqref{eq:riccati_centr00}, the optimal centralized Kalman predictor for system \eqref{eq:system00} is based on the iteration of the Riccati equation \eqref{eq:riccati_centr00}, which requires a global knowledge of the system and, in general, leads to a matrix gain which has not the sparsity properties of the dynamic system (i.e., of matrix $\mathbf{A}$).\\
In contrast, in this paper we seek for a distributed observer implementation, meaning that: (\emph{i}) at most data originated by neighbors are used by the local observers, to reduce the communication load of the scheme; (\emph{ii}) information about the model of the overall system is not required to be stored by each local observer, but at most information concerning the neighboring subsystems; (\emph{iii}) the computational load required by each local filter is scalable.\\
In line with this we propose an estimation scheme of the type
\begin{equation}
\label{eq:subpredictor00}
\hat{x}_i(k+1)=\sum_{j\in \mathcal{N}_i}\left\{A_{ij}\hat{x}_j(k)+ L_{ij}(k)(y_j(k)-C_j\hat{x}_j(k))\right\}
\end{equation}
where
\begin{equation}\label{eq:gain_distributed}
L_{ij}(k)=A_{ij}P_j(k)C_j^T(C_jP_j(k)C_j^T+R_j)^{-1}.
\end{equation}
The matrices $P_i(k)$, $i=1,\dots,M$ are updated according to the following distributed equation.
\begin{equation}
\label{eq:riccati_distr01}
\begin{array}{ll}
P_i(k+1)=&\sum_{j \in \mathcal{N}_i} \left(\tilde{A}_{ij}P_j(k)\tilde{A}_{ij}^T-\tilde{A}_{ij}P_j(k)\tilde{C}_j^T\right.\\
&\qquad\left.\cdot(\tilde{C}_jP_j(k)\tilde{C}_j^T+\tilde{R}_j)^{-1}\tilde{C}_jP_j(k)\tilde{A}_{ij}^T\right) + Q_i \end{array}
\end{equation}
where, for all $i,j=1,\dots,M$, we have defined $\tilde{A}_{ij}=\sqrt{\varsigma_j} A_{ij}$, $\tilde{C}_{i}=\sqrt{\varsigma_i} C_{i}$, and $\tilde{R}_{i}=\varsigma_i R_{i}$, where $\varsigma_i=|\mathcal{S}_i|$. Note that \eqref{eq:gain_distributed} and \eqref{eq:riccati_distr01} are equivalent to $L_{ij}(k)=\mathcal{L}(P_j(k),A_{ij},C_j,R_j)=\mathcal{L}(P_j(k),\tilde{A}_{ij},\tilde{C}_j,\tilde{R}_j)$ and $P_i(k+1)=\sum_{j\in\mathcal{N}_i}\mathcal{R}(P_j(k),\tilde{A}_{ij},\tilde{C}_j,0,\tilde{R}_j)+Q_i=
\sum_{j\in\mathcal{N}_i}\varsigma_j\mathcal{R}(P_j(k),{A}_{ij},{C}_j,0,{R}_j)+Q_i$, respectively.

Equation \eqref{eq:subpredictor00} is distributed, i.e., $L_{ij}\neq 0$ only if $A_{ij}\neq 0$. Therefore, the computation of $L_{ij}$ can be done distributedly and communication is required between local state estimators of dynamically interconnected subsystems only. Concerning the scalability of the algorithm observe also that, for $i \in \left\{1,\ldots, M\right\}$, subsystem $i$ permanently stores in memory only the matrices $Q_i$, $R_i$, $A_{ii}$, $C_i$ and, for $j \in \mathcal{N}_{i}$, $A _{ij}$, $C_j$ and $R_j$; on the other hand, the information which must be transmitted and temporarily stored at each time step consists of ${y}_j$, $\hat{x}_j$, $P_j$, $L_j^F$ for all $j \in \mathcal{N}_{i}$. The DKF algorithm is formally described in Algorithm~1.

\begin{algorithm}
\label{algo:DKF}
\caption{DKF algorithm}
\textbf{Memory requirements}\\
For $i \in \left\{1,\ldots,N\right\}$, subsystem $i$ stores in memory
\begin{algorithmic}
\State - Permanently $Q_i$, $R_i$, $A_{ii}$, $C_i$, $\left\{A _{ij}, C_j, R_j; j \in \mathcal{N}_{i} \right\}$;
\State - Temporarily, for each $k\geq 1$, $\left\{y_j(k), \hat{x}_j(k), P_j(k), L_{ij}(k); j \in \mathcal{N}_{i} \right\}$;
\end{algorithmic}
\hrulefill \\
\textbf{On-line implementation}\\
At each time step $k\geq 1$ subsystem $i$:
\begin{algorithmic}
\State \textbf{1)} Measures $y_i(k)$
\State \textbf{2)} Broadcasts to its successors the quantities $y_i(k)$, $\hat{x}_i(k)$, and $P_i(k)$;
\State \textbf{3)} Gathers from its neighbors the information $\left\{y_j(k),\hat{x}_j(k), P_j(k); j \in \mathcal{N}_{i} \right\}$;
\State \textbf{4)} Computes the gains $\left\{L_{ij}(k); j \in \mathcal{N}_{i} \right\}$ as in \eqref{eq:gain_distributed};
\State \textbf{5)} Computes the estimate $\hat{x}_i(k+1)$ and the matrix $P_i(k+1)$ as in \eqref{eq:subpredictor00} and \eqref{eq:riccati_distr01}, respectively.
\end{algorithmic}
\end{algorithm}
\subsection{Main properties}
Let us define $\hat{\mathbf{x}}(k)=(\hat{x}_1(k),\dots,\hat{x}_M(k))$, and the distributed filter estimation error $\mathbf{e}(k)=\mathbf{x}(k)-\hat{\mathbf{x}}(k)$. From \eqref{eq:system00} and \eqref{eq:subpredictor00} we obtain that
\begin{equation}\mathbf{e}(k+1)=(\mathbf{A}-\mathbf{L}(k)\mathbf{C})\mathbf{e}(k)-\mathbf{L}(k)\mathbf{v}(k)+\mathbf{w}(k)\label{eq:prediction_error_collective}\end{equation}
where ${\mathbf{L}}(k)$ is the matrix whose block entries are ${L}_{ij}(k)$. Let $\Pi_d(k)=$var$(\mathbf{e}(k))$. From~\eqref{eq:prediction_error_collective} the following is obtained.
\begin{equation}\Pi_d(k+1)=(\mathbf{A}-\mathbf{L}(k)\mathbf{C}) \Pi_d(k)(\mathbf{A}-\mathbf{L}(k)\mathbf{C})^T -\mathbf{L}(k)  \mathbf{R}  \mathbf{L}(k)^T+  \mathbf{Q}\label{eq:prediction_error_collective_var}\end{equation}
The following result can be derived.
\begin{lemma}
\label{lemma:stability}
Assume that the pair $(\mathbf{A,G})$ is stabilizable (where $\mathbf{GG}^T=\mathbf{Q}$) and that there exist symmetric matrices $\bar{P}_i\geq 0$, $i=1,\dots,M$ such that
\begin{equation}\label{eq:Pi_bar}
\bar{P}_i\geq \sum_{j\in\mathcal{N}_i}\mathcal{R}(\bar{P}_j,\tilde{A}_{ij},\tilde{C}_j,0,\tilde{R}_j)+Q_i,
\end{equation}
for all $i=1,\dots,M$. For all $i,j=1,\dots,M$, let
$\bar{L}_{ij}=\mathcal{L}(\bar{P}_j,A_{ij},C_j,R_j)$ and let $\bar{\mathbf{L}}$ be the matrix whose block entries are $\bar{L}_{ij}$. Then, the matrix $\mathbf{A}-\bar{\mathbf{L}}\mathbf{C}$ is Schur stable.\hfill{}$\square$
\end{lemma}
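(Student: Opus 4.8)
The plan is to exhibit a block-diagonal Lyapunov certificate for $\mathbf{F}:=\mathbf{A}-\bar{\mathbf{L}}\mathbf{C}$ built from the local matrices $\bar P_i$, and then to close the argument with a PBH/eigenvalue test. First I would rewrite the hypotheses in closed-loop form. Using $\tilde A_{ij}=\sqrt{\varsigma_j}A_{ij}$, $\tilde C_j=\sqrt{\varsigma_j}C_j$, $\tilde R_j=\varsigma_j R_j$, a direct computation gives the scaling invariance $\mathcal{L}(\bar P_j,\tilde A_{ij},\tilde C_j,\tilde R_j)=\mathcal{L}(\bar P_j,A_{ij},C_j,R_j)=\bar L_{ij}$, so the $(i,j)$ block of $\mathbf{F}$ is $F_{ij}:=A_{ij}-\bar L_{ij}C_j$, which vanishes whenever $j\notin\mathcal N_i$. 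Combining this with the elementary identity $\mathcal{R}(P,A,C,0,R)=(A-LC)P(A-LC)^T+LRL^T$, where $L=\mathcal{L}(P,A,C,R)$, and pulling out the scalar $\varsigma_j$, condition \eqref{eq:Pi_bar} takes the form $\bar P_i\ge\sum_{j\in\mathcal N_i}\varsigma_j\big(F_{ij}\bar P_jF_{ij}^T+\bar L_{ij}R_j\bar L_{ij}^T\big)+Q_i$ for every $i$.

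Next I would pass to the collective level with the block-diagonal candidate $\bar{\mathbf P}=\mathrm{diag}(\bar P_1,\dots,\bar P_M)$. The difficulty is that $\mathbf{F}\bar{\mathbf P}\mathbf{F}^T$ is not block diagonal, so its off-diagonal coupling must be absorbed. For any $\mathbf z=(z_1,\dots,z_M)$ I would write $\mathbf z^T\mathbf{F}\bar{\mathbf P}\mathbf{F}^T\mathbf z=\sum_j\|\bar P_j^{1/2}\sum_{i\in\mathcal S_j}F_{ij}^Tz_i\|^2$ and apply $\|\sum_{k=1}^m a_k\|^2\le m\sum_k\|a_k\|^2$ with $m=\varsigma_j=|\mathcal S_j|$, which is exactly the number of nonzero terms in the $j$-th block column. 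Swapping the order of summation via $i\in\mathcal S_j\Leftrightarrow j\in\mathcal N_i$ then yields $\mathbf{F}\bar{\mathbf P}\mathbf{F}^T\le\mathrm{diag}\big(\sum_{j\in\mathcal N_i}\varsigma_j F_{ij}\bar P_jF_{ij}^T\big)$. Substituting the closed-loop reformulation of \eqref{eq:Pi_bar} gives $\mathbf{F}\bar{\mathbf P}\mathbf{F}^T\le\bar{\mathbf P}-\mathbf Q-\mathbf D$, where $\mathbf D=\mathrm{diag}\big(\sum_{j\in\mathcal N_i}\varsigma_j\bar L_{ij}R_j\bar L_{ij}^T\big)\ge0$; in particular $\bar{\mathbf P}-\mathbf{F}\bar{\mathbf P}\mathbf{F}^T\ge\mathbf Q+\mathbf D\ge\mathbf{GG}^T$.

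Finally I would run the eigenvalue argument. Let $\mathbf{F}^T\mathbf v=\lambda\mathbf v$ with $\mathbf v\ne0$ and $|\lambda|\ge1$. Testing the inequality against $\mathbf v$ gives $(1-|\lambda|^2)\mathbf v^*\bar{\mathbf P}\mathbf v\ge\mathbf v^*(\mathbf Q+\mathbf D)\mathbf v\ge0$, whose left-hand side is nonpositive since $\bar{\mathbf P}\ge0$; hence $\mathbf v^*\mathbf Q\mathbf v=0$ and $\mathbf v^*\mathbf D\mathbf v=0$. The first yields $\mathbf G^T\mathbf v=0$. The second, because $R_j>0$ and $\varsigma_j\ge1$, forces $\bar L_{ij}^Tv_i=0$ for all $i,j$, i.e. $\bar{\mathbf L}^T\mathbf v=0$; therefore $\mathbf A^T\mathbf v=\mathbf F^T\mathbf v+\mathbf C^T\bar{\mathbf L}^T\mathbf v=\lambda\mathbf v$. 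Thus $\mathbf v$ would be an uncontrollable mode of $(\mathbf A,\mathbf G)$ with $|\lambda|\ge1$, contradicting the assumed stabilizability of $(\mathbf A,\mathbf G)$. Consequently no such $\lambda$ exists and $\sigma(\mathbf{F})<1$.

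I expect the two delicate points to be the following. The cross-term domination in the second step is where the specific weight $\varsigma_j=|\mathcal S_j|$ is essential: it makes the Cauchy–Schwarz count tight, and a smaller weight would break the inequality, so this is the structural heart of why the distributed design condition \eqref{eq:Pi_bar} is the right one. The second, and in my view the true crux, is that output injection does not preserve stabilizability, so one cannot simply transfer the hypothesis from $(\mathbf A,\mathbf G)$ to $(\mathbf F,\mathbf G)$; the proof circumvents this by exploiting the extra slack term $\mathbf D$ to deduce $\bar{\mathbf L}^T\mathbf v=0$ on the critical modes, which collapses the closed-loop eigenvector relation back to $\mathbf A^T\mathbf v=\lambda\mathbf v$ and lets the assumption on $(\mathbf A,\mathbf G)$ do its work.
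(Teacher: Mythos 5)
Your proof is correct and is essentially the paper's own argument: your cross-term absorption step, $\|\sum_{k=1}^{m}a_k\|^2\le m\sum_k\|a_k\|^2$ with $m=\varsigma_j$ counting the nonzero blocks in column $j$, is exactly the computation in the paper's preliminary Lemma on the distributed update, and your concluding eigenvector/PBH contradiction using the slack $\mathbf{Q}+\mathbf{D}$ (the paper's $\mathbf{Q}+\bar{\mathbf{L}}\mathbf{R}\bar{\mathbf{L}}^T$) to force $\mathbf{G}^T\mathbf{v}=0$ and $\bar{\mathbf{L}}^T\mathbf{v}=0$ is identical. The only (cosmetic) difference is the order of operations: the paper first bounds the centralized Riccati map $\mathcal{R}(\bar{\mathbf{P}},\mathbf{A},\mathbf{C},\mathbf{Q},\mathbf{R})=\mathbf{A}\bar{\mathbf{P}}^F\mathbf{A}^T+\mathbf{Q}$ by the block-diagonal distributed update and then passes to the closed-loop (Joseph) form with the collective optimal gain, whereas you convert each local term to Joseph form first and absorb the coupling directly on $\mathbf{F}\bar{\mathbf{P}}\mathbf{F}^T$.
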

Thanks to Lemma~\ref{lemma:stability}, a \emph{simplified version} of the DKF Algorithm~1 can be devised: assuming that each subsystem $i$ stores in memory matrix $\bar{P}_i$, $i=1,\dots,M$, with property \eqref{eq:Pi_bar}, then it is sufficient to set $P_i(k)=\bar{P}_i$ and $L_{ij}(k)=\bar{L}_{ij}=\mathcal{L}(\bar{P}_j,A_{ij},C_j,R_j)$ for all $k$ to guarantee that the estimation error $\mathbf{e}(k)$ is a stationary process. Therefore, the error covariance of this modified scheme is asymptotically convergent to a bounded definite positive matrix, i.e., $\lim_{k \to \infty} \Pi_d(k) = \bar{\Pi}_d$ for some positive definite matrix $\bar{\Pi}_d$.

In case the Algorithm~1 is implemented, under the assumption that there exist steady-state solutions of \eqref{eq:riccati_distr01} $\bar{P}_i\geq 0$, $i=1,\dots,M$, the next result can be proved.
\begin{proposition}\label{prop:stability}
Consider the DKF Algorithm 1. Assume that $P_i(1)$, $i=1,\ldots,M$, are such that there exists $\bar{P}_i$ with the property that
\begin{equation}\label{eq:Pi_bar_Eq}
\lim_{k \to \infty} P_i(k) = \bar{P}_i.
\end{equation}
Let $\bar{\mathbf{P}}=$diag$(\bar{P}_1,\dots,\bar{P}_M)$. Then, there exists a positive definite matrix $\bar{\Pi}_d$ such that $\lim_{k\rightarrow \infty}\Pi_d(k)= \bar{\Pi}_d$ and $\bar{\Pi}_d\leq \bar{\mathbf{P}}$. \hfill{}$\square$
\end{proposition}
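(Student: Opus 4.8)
The plan is to derive Proposition~\ref{prop:stability} from three ingredients: the Schur stability supplied by Lemma~\ref{lemma:stability}, a standard convergence result for asymptotically time-invariant linear recursions, and a comparison (super-solution) argument for the bound $\bar{\Pi}_d\leq\bar{\mathbf{P}}$. I would first settle convergence. Since $P_j(k)\to\bar{P}_j$ and $R_j>0$, continuity of $\mathcal{L}$ gives $L_{ij}(k)\to\bar{L}_{ij}=\mathcal{L}(\bar{P}_j,A_{ij},C_j,R_j)$, hence $\mathbf{L}(k)\to\bar{\mathbf{L}}$ and $\mathbf{A}-\mathbf{L}(k)\mathbf{C}\to\bar{\Phi}:=\mathbf{A}-\bar{\mathbf{L}}\mathbf{C}$. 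Each limit $\bar{P}_i$ is a fixed point of \eqref{eq:riccati_distr01}, so it satisfies \eqref{eq:Pi_bar} with equality, and Lemma~\ref{lemma:stability} then yields that $\bar{\Phi}$ is Schur. Vectorizing the covariance recursion for $\Pi_d(k)$ (which, consistently with \eqref{eq:riccati_centr00}, carries the term $+\mathbf{L}(k)\mathbf{R}\mathbf{L}(k)^T$) as $\pi(k+1)=F(k)\pi(k)+g(k)$, the transition matrix $F(k)=(\mathbf{A}-\mathbf{L}(k)\mathbf{C})\otimes(\mathbf{A}-\mathbf{L}(k)\mathbf{C})$ tends to $\bar{\Phi}\otimes\bar{\Phi}$, whose spectral radius is $\sigma(\bar{\Phi})^2<1$, while $g(k)$ tends to $\mathrm{vec}(\mathbf{Q}+\bar{\mathbf{L}}\mathbf{R}\bar{\mathbf{L}}^T)$. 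The standard vanishing-perturbation argument for recursions whose transition matrix converges to a Schur matrix and whose forcing term converges then gives $\Pi_d(k)\to\bar{\Pi}_d$, where $\bar{\Pi}_d$ is the unique solution of $\bar{\Pi}_d=\bar{\Phi}\bar{\Pi}_d\bar{\Phi}^T+\mathbf{Q}+\bar{\mathbf{L}}\mathbf{R}\bar{\mathbf{L}}^T$.

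The heart of the proof is the bound $\bar{\Pi}_d\leq\bar{\mathbf{P}}$, which I would obtain by showing that $\bar{\mathbf{P}}$ is a super-solution of the same Lyapunov equation, namely $\bar{\mathbf{P}}\geq\bar{\Phi}\bar{\mathbf{P}}\bar{\Phi}^T+\mathbf{Q}+\bar{\mathbf{L}}\mathbf{R}\bar{\mathbf{L}}^T$. Completing the square in $\mathcal{R}$ gives $\varsigma_j\mathcal{R}(\bar{P}_j,A_{ij},C_j,0,R_j)=\varsigma_j[(A_{ij}-\bar{L}_{ij}C_j)\bar{P}_j(A_{ij}-\bar{L}_{ij}C_j)^T+\bar{L}_{ij}R_j\bar{L}_{ij}^T]$, so the fixed-point identity reads $\bar{P}_i=Q_i+\sum_{j\in\mathcal{N}_i}\varsigma_j[(A_{ij}-\bar{L}_{ij}C_j)\bar{P}_j(A_{ij}-\bar{L}_{ij}C_j)^T+\bar{L}_{ij}R_j\bar{L}_{ij}^T]$. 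I would test the super-solution inequality against an arbitrary block vector $\boldsymbol{\xi}=(\xi_1,\dots,\xi_M)$: since $\bar{\Phi}_{ij}$ and $\bar{L}_{ij}$ are nonzero only when $j\in\mathcal{N}_i$, i.e.\ $i\in\mathcal{S}_j$, the $j$-th blocks of $\bar{\Phi}^T\boldsymbol{\xi}$ and of $\bar{\mathbf{L}}^T\boldsymbol{\xi}$ are sums of exactly $\varsigma_j=|\mathcal{S}_j|$ terms indexed by $i\in\mathcal{S}_j$. Applying the elementary bound $(\sum_{i=1}^{m}u_i)^{T}P(\sum_{i=1}^{m}u_i)\leq m\sum_{i=1}^{m}u_i^{T}Pu_i$, valid for $P\geq0$ with $m=\varsigma_j$, to each such block and then re-indexing the double sum via $\{(i,j):i\in\mathcal{S}_j\}=\{(i,j):j\in\mathcal{N}_i\}$ collapses the right-hand side precisely into $\sum_i\xi_i^{T}(\bar{P}_i-Q_i)\xi_i=\boldsymbol{\xi}^{T}(\bar{\mathbf{P}}-\mathbf{Q})\boldsymbol{\xi}$, which is the claimed inequality. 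This is exactly the step in which the successor-count weights $\varsigma_j$ of \eqref{eq:riccati_distr01} are consumed to absorb the off-diagonal blocks of $\bar{\Phi}\bar{\mathbf{P}}\bar{\Phi}^T+\bar{\mathbf{L}}\mathbf{R}\bar{\mathbf{L}}^T$ that the block-diagonal $\bar{\mathbf{P}}$ discards, and I expect it to be the main obstacle.

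Finally, once the super-solution inequality holds, monotonicity of the order-preserving operator $T(X)=\bar{\Phi}X\bar{\Phi}^T+\mathbf{Q}+\bar{\mathbf{L}}\mathbf{R}\bar{\mathbf{L}}^T$ gives $\bar{\mathbf{P}}\geq T(\bar{\mathbf{P}})\geq T^2(\bar{\mathbf{P}})\geq\cdots$, and since $\bar{\Phi}$ is Schur the iterates converge to the unique fixed point $\bar{\Pi}_d$, whence $\bar{\Pi}_d\leq\bar{\mathbf{P}}$. Positive definiteness of $\bar{\Pi}_d$ would follow from its Gramian representation $\bar{\Pi}_d=\sum_{t\geq0}\bar{\Phi}^t(\mathbf{Q}+\bar{\mathbf{L}}\mathbf{R}\bar{\mathbf{L}}^T)(\bar{\Phi}^T)^t$ together with the stabilizability of $(\mathbf{A},\mathbf{G})$, $\mathbf{GG}^T=\mathbf{Q}$; this is the only other delicate point, since reachability is not automatically preserved by the output injection $\mathbf{A}\mapsto\bar{\Phi}$, and establishing it may require exploiting $\mathbf{R}>0$ and the structure of $[\mathbf{G}\ \ \bar{\mathbf{L}}\mathbf{R}^{1/2}]$.
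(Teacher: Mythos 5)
Your proposal is correct, and its first part (convergence) coincides with the paper's argument: you note that the limits $\bar{P}_i$ are fixed points of \eqref{eq:riccati_distr01}, hence satisfy \eqref{eq:Pi_bar}, invoke Lemma~\ref{lemma:stability} for Schur stability of $\mathbf{A}-\bar{\mathbf{L}}\mathbf{C}$ (correctly reading the $+\bar{\mathbf{L}}\mathbf{R}\bar{\mathbf{L}}^T$ sign, despite the typo in \eqref{eq:prediction_error_collective_var}), and conclude convergence of $\Pi_d(k)$ to the unique Lyapunov solution exactly as the paper does. Where you genuinely diverge is the bound $\bar{\Pi}_d\leq\bar{\mathbf{P}}$. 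The paper argues along trajectories: using Lemma~\ref{prop:distr} it shows by induction that $\Pi_d(1)\leq\mathbf{P}(1)$ implies $\Pi_d(k)\leq\mathbf{P}(k)$ for all $k$, initializes $\Pi_d(1)=0$, passes to the limit, and then uses uniqueness of the limit to cover all initializations. You instead work entirely at steady state: you show $\bar{\mathbf{P}}$ is a super-solution of the limiting Lyapunov equation and iterate the monotone operator $T$ down to $\bar{\Pi}_d$. Note, however, that your ``heart of the proof'' is not new relative to the paper: the super-solution inequality you derive by hand is precisely \eqref{eq:riccati_ineq_01} in the proof of Lemma~\ref{lemma:stability}, which the paper obtains as an instance of Lemma~\ref{prop:distr}, and your elementary bound $(\sum_i u_i)^TP(\sum_i u_i)\leq m\sum_i u_i^TPu_i$ is the same pairing inequality $w_{rj}^Tw_{sj}\leq\tfrac12(\|w_{rj}\|^2+\|w_{sj}\|^2)$ used inside that lemma's proof; you could have cited it directly. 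What each route buys: the paper's trajectory comparison yields the stronger finite-time statement $\Pi_d(k)\leq\mathbf{P}(k)$ for every $k$, which supports the interpretation of $P_i(k)$ as an online covariance bound, whereas your fixed-point argument is more self-contained and avoids reasoning about the time-varying recursion. Finally, you are right to flag positive definiteness of $\bar{\Pi}_d$ as the delicate point: the paper's proof does not establish it either (it follows immediately from the Gramian representation only when $\mathbf{Q}>0$), and both arguments implicitly require the stabilizability hypothesis of Lemma~\ref{lemma:stability}, which is not restated in Proposition~\ref{prop:stability}; on this point your treatment is the more careful of the two.
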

Note that, under the validity of \eqref{eq:Pi_bar_Eq}, then in steady state conditions also \eqref{eq:Pi_bar} is verified. Therefore the DKF Algorithm~1 provides a stationary equation error; also, Proposition~\ref{prop:stability} states that, for $i=1,\ldots,M$, matrix $\bar{P}_i$ plays the role of an upper bound of the covariance of the prediction error $x_i(k)-\hat{x}_i(k)$ in steady state.\\
Observe that Lemma \ref{lemma:stability} and Proposition \ref{prop:stability} require the existence of matrices $\bar{P}_i$, $i=1,\ldots,M$, such that either property \eqref{eq:Pi_bar} or property \eqref{eq:Pi_bar_Eq} are satisfied. However, differently from the centralized Kalman filter, these properties are not guaranteed by standard detectability assumptions on the system.\\
%
%
In this paper we provide conditions under which these properties can be verified. In particular, in Section \ref{sec:convergence} we discuss the conditions allowing the application of  centralized design procedures while, in Section \ref{sec:d-design}, we provide a distributed design procedure.\\
We conclude this section with a couple of remarks.
\begin{remark}
Consider the DKF algorithm and assume that \eqref{eq:Pi_bar_Eq} holds true.
Let $\bar{\Pi}_c$ be the steady-state covariance of the prediction error for the centralized Kalman filter. Obviously, if $\Pi_d(1)=\Pi_c(1)$ then  $\Pi_c(k)\leq \Pi_d(k)$ for all $k\geq1$.
\end{remark}
\begin{remark}\label{rem:subopt_costant}
Assume that \eqref{eq:Pi_bar} holds true and consider the \emph{simplified DKF algorithm} described after Lemma \ref{lemma:stability}. Then, also in this case, the asymptotic covariance of the prediction error $x_i(k)-\hat{x}_i(k)$ is
upper-bounded by the matrix $\bar{P}_i$.
\end{remark}

\section{Centralized design}
\label{sec:convergence}
In this section we address the problem of providing ($i$) conditions that can be used to guarantee a-priori the validity of properties \eqref{eq:Pi_bar} or \eqref{eq:Pi_bar_Eq} and ($ii$) practical methods for computing them. First, in Section \ref{subsec:convergence:LMI} we will analyze \eqref{eq:Pi_bar} through a linear matrix inequality approach; secondly, in Section \ref{subsec:convergence:smallgain} we will provide an aggregate design procedure, based on small gain arguments, to guarantee \eqref{eq:Pi_bar_Eq}
\subsection{Design using LMI's}
\label{subsec:convergence:LMI}
In this section we provide a practical method based on LMI's for computing, if possible, matrices $\bar{P}_i$ verifying \eqref{eq:Pi_bar}. Then, as already highlighted, if we set $P_i(k)=\bar{P}_i$ for all $k$ and for all $i=1,\dots,M$, then it is guaranteed that this simplified version of the DKF algorithm has suitable convergence properties in view of Lemma \ref{lemma:stability}. Also, its suboptimality features are discussed in Remark \ref{rem:subopt_costant}.\\
Using LMI's we aim to compute (see Lemma \ref{lemma:stability}) $\bar{P}_i$, $i=1,\dots, M$ verifying
\begin{align}\label{eq:riccati_distr04_alg}
\bar{P}_i&\geq \sum_{j \in \mathcal{N}_i} \left(\tilde{A}_{ij}\bar{P}_j\tilde{A}_{ij}^T-\tilde{A}_{ij}\bar{P}_j\tilde{C}_j^T(\tilde{C}_j\bar{P}_j\tilde{C}_j^T+\tilde{R}_j)^{-1}\tilde{C}_j\bar{P}_j\tilde{A}_{ij}^T\right) + Q_i
\end{align}
Provided that $\bar{P}_j$ is non singular for each $j\in\mathcal{N}_i$, the algebraic inequality \eqref{eq:riccati_distr04_alg} is equivalent to
\begin{equation}\label{eq:riccati_distr05_alg}
\begin{array}{ll}
\bar{P}_i&\geq \sum_{j \in \mathcal{N}_i} \tilde{A}_{ij}\bar{P}_j(\bar{P}_j+\bar{P}_j\tilde{C}_j^T\tilde{R}_{j}^{-1}\tilde{C}_j\bar{P}_j)^{-1}\bar{P}_j\tilde{A}_{ij}^T + Q_i\\
&\geq\sum_{j \in \mathcal{N}_i} \tilde{A}_{ij}(\bar{P}_j^{-1}+\tilde{C}_j^T\tilde{R}_j^{-1}\tilde{C}_j)^{-1}\tilde{A}_{ij}^T + Q_i
\end{array}\end{equation}
thanks to the application of the matrix inversion lemma. Inequality \eqref{eq:riccati_distr05_alg} can be cast as the following LMI
\begin{equation}
\label{eq:LMI01}
\begin{bmatrix}
\bar{P}_i&\tilde{A}_{i1}\Delta_{1}&\dots&\tilde{A}_{iM}\Delta_M&G_i\\
*&\Delta_{1}&\dots&0&0\\
\vdots&\vdots&\ddots&\vdots&\vdots\\
*&0&\dots&\Delta_M&0\\
*&0&\dots&0&I
\end{bmatrix}\geq 0
\end{equation}
where $G_i$ is defined in such a way that $G_iG_i^T=Q_i$ and, for all $j=1,\dots,M$, $\Delta_j\geq (\bar{P}_j^{-1}+\tilde{C}_j^T\tilde{R}_j^{-1}\tilde{C}_j)^{-1}$. If we define $\Omega_j=\bar{P}_j^{-1}$, the latter inequality can be written as
\begin{equation}
\label{eq:LMI02}
\begin{bmatrix}
\Delta_j&I\\
I&\Omega_j+\tilde{C}_j^T\tilde{R}_j^{-1}\tilde{C}_j
\end{bmatrix}\geq 0.
\end{equation}
Finally, the equality $\Omega_j=\bar{P}_j^{-1}$ can be managed using the recursive approach proposed in \cite{ElGhaoui_ConeComplementarity}.
Indeed, we solve the following LMI
\begin{equation}
        \begin{bmatrix}\Omega_j& I\\
        I& \bar{P}_j
         \end{bmatrix}\geq 0 \label{eq:Cone_LMI}
\end{equation}
and, at the same time, we minimize the additional cost function
$\mathrm{tr}\{\Omega_j \bar{P}_j\}$.
The problem can be managed using the recursive cone complementarity linearization algorithm discussed in \cite{ElGhaoui_ConeComplementarity}.
%
%
\subsection{Design using small gain arguments}
\label{subsec:convergence:smallgain}
In this section we investigate conditions ensuring the validity of \eqref{eq:Pi_bar_Eq}. In particular,
the following result addresses the offline design issue providing an aggregate and lightweight analytical condition, which relies on small-gain arguments. First, the following assumption is required.
\begin{assumption}
\label{ass:local-in} For subsystem $i$, ${A}_{ii}$ is invertible.\hfill$\square$
\end{assumption}
We will also need one of the following assumptions for properly initializing $P_i(1)$ for the implementation of Algorithm~1.
\begin{assumption}
\label{ass:local-det1}For subsystem $i$
\begin{itemize}
\item[(i)] $({A}_{ii},{C}_i)$ is detectable;
\item[(ii)] $({A}_{ii},G_i)$ is stabilizable, where $G_i$ verifies $G_iG_i^T=Q_i$;
\end{itemize}
\end{assumption}
\begin{assumption}
\label{ass:local-det2}For subsystem $i$
\begin{itemize}
\item[(i)] $(\tilde{A}_{ii},\tilde{C}_i)$ is detectable;
\item[(ii)] $(\tilde{A}_{ii},G_i)$ is stabilizable;
\end{itemize}
\end{assumption}
Note that, while Assumption~\ref{ass:local-det1} is required to define, for a given subsystem $i$, $\bar{P}_i^N$ as the unique semi-positive definite solution to the local Riccati algebraic equation $\bar{P}_i^N=\mathcal{R}(\bar{P}_i^N,{A}_{ii},{C}_i,{Q}_i,{R}_i)$, Assumption~\ref{ass:local-det2} allows to define $\tilde{P}_i^N$ as the unique semi-positive definite solution to $\tilde{P}_i^N=\mathcal{R}(\tilde{P}_i^N,\tilde{A}_{ii},\tilde{C}_i,{Q}_i,\tilde{R}_i)$.

Let us now define full rank $n_i$ arbitrary transformation matrices $H_i$, $i=1,\dots,M$, i.e., $H_i\in\mathbb{R}^{n_i\times n_i}$ (introduced for reducing, if possible, the conservativity of the results stated next) and gains $\bar{L}_i$, selected in such a way that $\bar{F}_i=\tilde{A}_{ii}-\bar{L}_i \tilde{C}_i$ is Schur stable.
Define also $\hat{F}_i=H_i\bar{F}_iH_i^{-1}$ and $\hat{A}_{ij}=H_i {A}_{ij}H_j^{-1}$, for all $j=1,\dots,M$.
Finally we define
$${\Gamma}=\left\{
\begin{array}{ll}{\gamma}_{ij}=0&\text{if }j=i\\
{\gamma}_{ij}=\frac{\mu_i^2}{1-\lambda_i^2}\|\hat{A}_{ij}\hat{A}_{jj}^{-1}\|^2 & \text{if }j\neq i
\end{array}
\right.$$
Scalars $\mu_i\geq 1$, $\lambda_i\in[0,1)$ are defined in such a way that $\|\hat{F}_i^h\|\leq \mu_i \lambda_i^h$. We introduce a further assumption.
\begin{assumption}
For some values of $\bar{L}_i$, $H_i$, (i) $\sigma(\hat{F}_i)<1$ for all $i=1,\dots,M$, and (ii) $\sigma(\Gamma)<1$.
\label{ass:small-gain-global}
\end{assumption}
Note that, a necessary condition for the existence of matrix $\bar{L}_i$ guaranteeing that $\sigma(\hat{F}_i)<1$ is that $(\tilde{A}_{ii},\tilde{C}_i)$ is detectable, i.e., Assumption~\ref{ass:local-det2}; therefore, the latter is implicitly required by Assumption~\ref{ass:small-gain-global}.
\begin{theorem}
\label{prop:small-gain}
 If Assumption \ref{ass:local-in} holds for all $i=1,\dots,M$ and under Assumption \ref{ass:small-gain-global}, there exist $\bar{P}_i\geq 0$ for all $i=1,\dots,M$ such that, $P_i(k)\rightarrow \bar{P}_i$ as $k\rightarrow\infty$ if one of the following initializations is used:
\begin{enumerate}[a.]
\item $P_i(1)=0$ for all $i=1,\dots,M$.
\item $P_i(1)=\bar{P}_i^N$ if Assumption~\ref{ass:local-det1} holds for all $i=1,\dots,M$.
\item $P_i(1)=\tilde{P}_i^N$ if Assumption~\ref{ass:local-det2} holds for all $i=1,\dots,M$.\hfill{}$\square$
\end{enumerate}
\end{theorem}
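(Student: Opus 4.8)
The plan is to show, for each of the three initializations, that the sequence generated by \eqref{eq:riccati_distr01} is monotone nondecreasing in the Loewner order and uniformly bounded above; convergence of a bounded nondecreasing sequence of symmetric matrices then yields a limit $\bar P_i\ge0$, and continuity of the update map makes the limit a fixed point. Write $\mathcal F_i(P_1,\dots,P_M)=\sum_{j\in\mathcal N_i}\mathcal R(P_j,\tilde A_{ij},\tilde C_j,0,\tilde R_j)+Q_i$ for the right-hand side of \eqref{eq:riccati_distr01} and $\mathcal F=(\mathcal F_1,\dots,\mathcal F_M)$. Using $\mathcal R(P,A,C,0,R)=A\,M(P)\,A^T$ with $M(P)=P-PC^T(CPC^T+R)^{-1}CP$, and the monotonicity of $M(\cdot)$ on the positive semidefinite cone (it equals $(P^{-1}+C^TR^{-1}C)^{-1}$ for $P>0$ and extends by continuity), each summand of $\mathcal F_i$ is monotone in its argument, so $\mathcal F$ is monotone. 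It then suffices to verify $P_i(2)\ge P_i(1)$ for every $i$, since monotonicity of $\mathcal F$ propagates the ordering by induction.

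For (a), $P_i(1)=0$ gives $P_i(2)=Q_i\ge0$. For (c), $P_i(1)=\tilde P_i^N$ solves $\tilde P_i^N=\mathcal R(\tilde P_i^N,\tilde A_{ii},\tilde C_i,0,\tilde R_i)+Q_i$, so retaining only the diagonal summand and $Q_i$ in $\mathcal F_i$ gives $\mathcal F_i(\tilde{\mathbf P}^N)\ge\mathcal R(\tilde P_i^N,\tilde A_{ii},\tilde C_i,0,\tilde R_i)+Q_i=\tilde P_i^N$. For (b), $P_i(1)=\bar P_i^N$ solves $\bar P_i^N=\mathcal R(\bar P_i^N,A_{ii},C_i,0,R_i)+Q_i$; using $\mathcal R(\bar P_i^N,\tilde A_{ii},\tilde C_i,0,\tilde R_i)=\varsigma_i\mathcal R(\bar P_i^N,A_{ii},C_i,0,R_i)=\varsigma_i(\bar P_i^N-Q_i)$ together with $\varsigma_i=|\mathcal S_i|\ge1$ and $\bar P_i^N-Q_i\ge0$ yields $\mathcal F_i(\bar{\mathbf P}^N)\ge\varsigma_i(\bar P_i^N-Q_i)+Q_i=\bar P_i^N+(\varsigma_i-1)(\bar P_i^N-Q_i)\ge\bar P_i^N$. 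Thus in all three cases the sequence is nondecreasing.

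The crux is the uniform upper bound, and it is here that Assumption~\ref{ass:local-in} and Assumption~\ref{ass:small-gain-global} enter; it holds for any semidefinite initialization. Pass to $\hat P_i(k)=H_iP_i(k)H_i^T$ and set $\rho_i(k)=\|\hat P_i(k)\|$. For the diagonal summand, suboptimality of the Riccati gain gives $\mathcal R(P_i,\tilde A_{ii},\tilde C_i,0,\tilde R_i)\le\bar F_iP_i\bar F_i^T+\bar L_i\tilde R_i\bar L_i^T$ with $\bar F_i=\tilde A_{ii}-\bar L_i\tilde C_i$, which upon conjugation by $H_i$ becomes $\hat F_i\hat P_i\hat F_i^T$ plus a constant. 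For each off-diagonal summand ($j\ne i$), the shared factor $M(\cdot)$ yields the exact identity
\[
\mathcal R(P_j(k),\tilde A_{ij},\tilde C_j,0,\tilde R_j)=A_{ij}A_{jj}^{-1}\,\mathcal R(P_j(k),\tilde A_{jj},\tilde C_j,0,\tilde R_j)\,A_{jj}^{-T}A_{ij}^T,
\]
which is well defined because $A_{jj}$ is invertible (Assumption~\ref{ass:local-in}) and $\tilde A_{ij}\tilde A_{jj}^{-1}=A_{ij}A_{jj}^{-1}$; since the right factor is one of the nonnegative terms composing $P_j(k+1)$ in \eqref{eq:riccati_distr01}, it is dominated by $P_j(k+1)$. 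Conjugating by $H_i$ produces exactly the gain $\hat A_{ij}\hat A_{jj}^{-1}$ and the matrix $\hat P_j(k+1)$, so
\[
\hat P_i(k+1)\le \hat F_i\hat P_i(k)\hat F_i^T+\sum_{j\in\mathcal N_i,\,j\ne i}(\hat A_{ij}\hat A_{jj}^{-1})\hat P_j(k+1)(\hat A_{ij}\hat A_{jj}^{-1})^T+\hat C_i
\]
with $\hat C_i$ constant. Unrolling the $\hat F_i$-recursion and using $\|\hat F_i^h\|\le\mu_i\lambda_i^h$, the geometric sum $\sum_h\mu_i^2\lambda_i^{2h}=\mu_i^2/(1-\lambda_i^2)$ arises; taking norms gives, for $R_i=\sup_k\rho_i(k)$, the scalar comparison $R_i\le c_i+\sum_{j\ne i}\gamma_{ij}R_j$ with precisely the coefficients $\gamma_{ij}$ defining $\Gamma$ and $c_i=\mu_i^2\rho_i(1)+\tfrac{\mu_i^2}{1-\lambda_i^2}\|\hat C_i\|$. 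Since $\Gamma\ge0$ entrywise and $\sigma(\Gamma)<1$, the Neumann series $(I-\Gamma)^{-1}=\sum_{n\ge0}\Gamma^n$ is nonnegative, so $\mathbf R\le(I-\Gamma)^{-1}\mathbf c$ is finite; to avoid presupposing finiteness one first applies the estimate to the truncated suprema $\max_{1\le l\le N}\rho_i(l)$ and lets $N\to\infty$. Hence $\|P_i(k)\|\le\|H_i^{-1}\|^2R_i$ is bounded uniformly in $k$.

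Being nondecreasing and bounded above, each $P_i(k)$ converges to some $\bar P_i\ge0$, and continuity of $\mathcal F$ (the matrix $\tilde C_jP\tilde C_j^T+\tilde R_j$ is always invertible since $\tilde R_j>0$) gives $\bar{\mathbf P}=\mathcal F(\bar{\mathbf P})$, establishing the claim for all three initializations under the stated assumptions. I expect the main obstacle to be exactly this boundedness step: isolating the off-diagonal factorization through $A_{jj}^{-1}$ that reproduces $\|\hat A_{ij}\hat A_{jj}^{-1}\|^2$ rather than the cruder $\|\hat A_{ij}\|^2$, and organizing the unrolling so that the induced scalar recursion is governed precisely by $\Gamma$; by comparison, the monotonicity bookkeeping and the per-initialization checks $P_i(2)\ge P_i(1)$ are routine.
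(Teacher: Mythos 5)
Your proof is correct and follows essentially the same route as the paper's: per-initialization verification that $P_i(2)\geq P_i(1)$, monotonicity of the update map, and a uniform upper bound obtained from a small-gain argument governed by exactly the same matrix $\Gamma$, concluding by monotone-bounded convergence. The only cosmetic difference is in the boundedness step, where you bound $\hat{P}_i(k)$ directly via the factorization $\mathcal{R}(P_j,\tilde{A}_{ij},\tilde{C}_j,0,\tilde{R}_j)=\left(A_{ij}A_{jj}^{-1}\right)\mathcal{R}(P_j,\tilde{A}_{jj},\tilde{C}_j,0,\tilde{R}_j)\left(A_{ij}A_{jj}^{-1}\right)^T$ and domination by $P_j(k+1)$, whereas the paper splits $P_i(k)=P_i^L(k)+\Delta_i(k)$ and bounds the local part $P_i^L(k)$; both reduce to the identical scalar comparison $(I-\Gamma)\mathbf{n}\leq\mathbf{q}$.
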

Regarding Assumption~\ref{ass:small-gain-global}, provided that Assumption~\ref{ass:local-det2} (i) is verified, it is always possible of find $\bar{L}_i$ such that $\sigma(\hat{F}_i)<1$ for all $i=1,\dots,M$. Note that, in case the system has a cascade topology (i.e., if it admits a lower - or upper - block triangular form,~\cite{Siljak91}), $\Gamma$ is block triangular, and therefore Assumption ~\ref{ass:small-gain-global} can be easily verified.\\
On the other hand, for more general system structures, we need to retrieve a suitable ``decentralized" change of coordinates and, at the same time, a suitable ``auxiliary" decentralized linear observer, for which $\sigma(\hat{F}_i)<1$ for all $i=1,\dots,M$ and the corresponding matrix $\Gamma$ is stable. This amounts to a design problem, which can be cast, for example, as the following optimization.
\begin{subequations}
\label{eq:centralized_design}
\begin{align}
\min_{\{H_{i},L_{i}\}_{i=1}^M} \sigma(\Gamma)
\end{align}
subject to the definition of $\Gamma$ and to
\begin{align}
\sigma(\Gamma)&<1\\
\sigma(\hat{F}_i)&<1,\,\,i=1,\dots,M
\end{align}
\end{subequations}
To reduce the computational load of \eqref{eq:centralized_design} the values of $H_i$ can be constrained. For example, one can try to minimize the terms $\mu_i$ by constraining $H_i$ to take values corresponding to which $\hat{F}_i=H_i\bar{F}_iH_i^{-1}$ is diagonal (provided that $\bar{F}_i$ is diagonalizable and has real eigenvalues), or one can set $H_i=I_{n_i}$. The optimization \eqref{eq:centralized_design} is nonlinear, and therefore a suitable initialization is fundamental, for example selecting $\bar{F}_i$ as the Kalman predictor gains.\\
To reduce the computational complexity and to allow for flexible and reliable operation, in next Section we provide a distributed and scalable design procedure to be applied at each subsystem level.
\section{Distributed design and plug and play features}
\label{sec:d-design}
In many practical applications, it is of interest to perform the design of the DKF in a distributed fashion, i.e., to have a set of conditions to be verified locally by each subsystem, possibly using pieces of information provided by the neighboring subsystems.\\
Focusing on the main assumptions of Theorem \ref{prop:small-gain}, while Assumptions~\ref{ass:local-in} and~\ref{ass:small-gain-global} (i) are local conditions, to be verified at a single subsystem level, Assumption \ref{ass:small-gain-global} (ii) is centralized (although aggregate), since it involves information concerning the overall system.
We now introduce the following assumption, providing a conservative, yet distributed and very simple, condition, that must be verified at a single subsystem level by each subsystem, that implies the Schur stability of $\Gamma$, as proved in Proposition~\ref{propo:Gamma-distributed} stated below.
\begin{assumption}
For all $i=1,\dots, M$ and for some values of $\bar{L}_i$, $H_i$, it holds that
\begin{equation}\rho_i=\sum_{j=1}^M\gamma_{ij}<1\end{equation}
\label{ass:small-gain-distributed}
\end{assumption}
\begin{proposition}
\label{propo:Gamma-distributed}
If Assumption \ref{ass:small-gain-distributed} holds, then Assumption \ref{ass:small-gain-global} (ii) is verified.\hfill$\square$
\end{proposition}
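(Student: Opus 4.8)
The plan is to exploit the fact that $\Gamma$ is, by construction, a nonnegative matrix and that the condition $\rho_i < 1$ is precisely a bound on its row sums; the spectral radius can then be controlled through a convenient induced matrix norm. Recall from the definition that $\gamma_{ii}=0$ and, for $j\neq i$, $\gamma_{ij}=\frac{\mu_i^2}{1-\lambda_i^2}\|\hat{A}_{ij}\hat{A}_{jj}^{-1}\|^2$, so that every entry of $\Gamma$ is real and nonnegative (note $\mu_i\geq 1$ and $\lambda_i\in[0,1)$ guarantee the scalar coefficient is positive). Hence $\gamma_{ij}=|\gamma_{ij}|$ for all $i,j$.

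First I would introduce the maximum absolute row-sum norm $\|\Gamma\|_{\infty}=\max_{i}\sum_{j=1}^M|\gamma_{ij}|$, which is a submultiplicative matrix norm induced by the vector $\infty$-norm. Since all entries are nonnegative, the inner sum is exactly $\sum_{j=1}^M\gamma_{ij}=\rho_i$, and therefore
\begin{equation}
\|\Gamma\|_{\infty}=\max_{i=1,\dots,M}\rho_i.
\label{eq:rowsumnorm}
\end{equation}
Under Assumption~\ref{ass:small-gain-distributed} we have $\rho_i<1$ for every $i$, so the maximum in \eqref{eq:rowsumnorm} is strictly less than $1$, giving $\|\Gamma\|_{\infty}<1$.

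The second step is to invoke the standard inequality relating the spectral radius to any induced matrix norm, namely $\sigma(\Gamma)\leq\|\Gamma\|_{\infty}$. Combining this with the bound just obtained yields $\sigma(\Gamma)\leq\|\Gamma\|_{\infty}<1$, which is exactly the statement $\sigma(\Gamma)<1$ of Assumption~\ref{ass:small-gain-global}~(ii). This completes the argument.

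I do not anticipate a genuine obstacle here: the result is essentially the elementary fact that a nonnegative matrix with all row sums below unity is Schur stable, and the only care required is to verify that the entries of $\Gamma$ are indeed nonnegative so that the row-sum norm coincides with the sum of the $\gamma_{ij}$ rather than merely dominating it. As an alternative route, the same conclusion follows from the Gershgorin circle theorem (each eigenvalue lies in a disc centered at $\gamma_{ii}=0$ with radius $\sum_{j\neq i}\gamma_{ij}=\rho_i<1$) or from Perron--Frobenius theory applied to the nonnegative matrix $\Gamma$; I would mention the row-sum norm proof as the cleanest and most self-contained.
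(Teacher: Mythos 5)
Your proposal is correct and essentially coincides with the paper's proof: the paper argues via the Gershgorin circle theorem, which you list as an alternative, and since $\gamma_{ii}=0$ for all $i$ the Gershgorin discs are centered at the origin with radii $\rho_i$, so the Gershgorin bound and your row-sum norm bound $\sigma(\Gamma)\leq\|\Gamma\|_{\infty}=\max_i\rho_i<1$ are the same estimate. Your attention to the nonnegativity of the entries (so that $\sum_j|\gamma_{ij}|=\rho_i$ exactly) is the only point of care, and you handle it properly.
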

As it will be shown in the remainder of the section, this result allows for PnP operation.
The PnP scenario consists of the case when one or more subsystems (each described by \eqref{eq:subsystems00}) or devices (and specifically a transducer) is added to or removed from the interconnected system.\\
Before to proceed, the following standing assumption sets the scenario where PnP operations take place, assuming that the PnP event occurs at time instant $k=T_{PnP}$.
\noindent
\begin{assumption}\hfill\\
- For $k<T_{PnP}$, Assumptions~\ref{ass:small-gain-global} (i),~\ref{ass:small-gain-distributed}, and~\ref{ass:local-in} (for all $i=1,\dots,M$) hold.\\
- At $k=T_{PnP}$ the updates \eqref{eq:riccati_distr01} are in steady state, i.e., $P_i(T_{PnP})=\bar{P}_i$ for all $i=1,\dots,M$.\hfill$\square$
\label{ass:PnP}
\end{assumption}
It is important to remark that, when PnP operations involving subsystems take place, the number of successors, for some subsystems, may change. Denote with ${\mathcal{S}}_i^+$ the set of successors of subsystem $i$ after the PnP event and $\varsigma_i^+=|{\mathcal{S}}_i^+|$. In general it holds that $\varsigma_i^+\neq \varsigma_i$. From this, it also follows that the matrices $\tilde{A}_{ij}$, $\tilde{C}_{i}$, and $\tilde{R}_i$ must be redefined, i.e., $\tilde{A}^+_{ij}=\sqrt{{\varsigma_j^+}}{A}_{ij}=\sqrt{\frac{\varsigma_j^+}{\varsigma_j}}\tilde{A}_{ij}$, $\tilde{C}^+_{i}=\sqrt{{\varsigma_i^+}}{C}_{i}=\sqrt{\frac{\varsigma_i^+}{\varsigma_i}}\tilde{C}_{i}$, and $\tilde{R}^+_i=\varsigma_i^+{R}_i=\frac{\varsigma_i^+}{\varsigma_i}\tilde{R}_i$.
Importantly, in case $\varsigma_i^+>\varsigma_i$, this may prevent the detectability of the pair $(\tilde{A}_{ii}^+,\tilde{C}^+_{i})$ to hold, which may jeopardize the verifiability of Assumption~\ref{ass:small-gain-global} (i). We also assume that $H_i$ and $\bar{L}_i$ are not redefined, for $i=1,\dots,M$ after the PnP event. From this it follows that, for all $i,j=1,\dots,M$, $\|\hat{A}_{ij}^+(\hat{A}_{jj}^+)^{-1}\|=\|\hat{A}_{ij}\hat{A}_{jj}^{-1}\|$.
\subsection{Plug-in of a subsystem}
Assume that, at step $T_{PnP}$, the subsystem $(M+1)$ is introduced. For each $i=1,\dots,M$
\begin{itemize}
\item if $i\not\in\mathcal{S}_{M+1}\cup\mathcal{N}_{M+1}$, then $\gamma_{i(M+1)}=0$. Also, since $\varsigma_i^+=\varsigma_i$, $\hat{F}_i^+=\hat{F}_i$, then $\mu_i^+=\mu_i$ and $\lambda_i^+=\lambda_i$. In view of this, $\rho_i^+=\sum_{j=1}^{M+1}\gamma^+_{ij}=\sum_{j=1}^{M}\gamma_{ij}=\rho_i<1$;
\item if $i\in\mathcal{S}_{M+1}$ but $i\not\in\mathcal{N}_{M+1}$, then $\varsigma_i^+=\varsigma_i$: Therefore $\hat{F}_i^+=\hat{F}_i$, $\mu_i^+=\mu_i$, and $\lambda_i^+=\lambda_i$. However, since $M+1\in\mathcal{N}_i$, $\gamma_{i(M+1)}>0$. Therefore
$\rho_i^+=\sum_{j=1}^{M+1}\gamma^+_{ij}=\rho_i+\gamma^+_{i(M+1)}>\rho_i$;

\item if $i\in\mathcal{N}_{M+1}$ but $i\not\in\mathcal{S}_{M+1}$, then $\gamma_{i(M+1)}=0$. However, $\varsigma_i^+=\varsigma_i+1$ and therefore $\bar{F}_i^+=\sqrt{\varsigma_i+1}(A_{ii}-\bar{L}_iC_i)=\sqrt{(\varsigma_i+1)/\varsigma_i}\bar{F}_i$ may not be Schur stable. If $\bar{F}_i^+$ is stable, $\mu_i^+=\mu_i$ but, at the same time, $\lambda_i^+=\sqrt{(\varsigma_i+1)/\varsigma_i}\lambda_i>\lambda_i$. In view of this $\gamma_{ij}^+=(1-\lambda_i^2)/(1-\lambda_i^{+2})\gamma_{ij}$ for all $j=1,\dots,M$. Therefore
$\rho_i^+=(1-\lambda_i^2)/(1-\lambda_i^{+2})\rho_i>\rho_i$;
\item $i\in\mathcal{S}_{M+1}\cap\mathcal{N}_{M+1}$, the Schur stability of $\bar{F}_i^+$ is not guaranteed. If $\bar{F}_i^+$ is Schur stable, we can compute $\rho_i^+=\sum_{j=1}^{M}\gamma^+_{ij}+\gamma^+_{i(M+1)}=
    (1-\lambda_i^2)/(1-\lambda_i^{+2})\rho_i+\gamma^+_{i(M+1)}>\rho_i$, in view of the fact that both $\lambda_i^+>\lambda_i$ and $\gamma^+_{i(M+1)}>0$.
\end{itemize}
The design of $\bar{L}_{M+1}$, $H_{M+1}$ can be addressed through the following optimization problem.
\begin{subequations}
\label{eq:scalable_design}
\begin{align}
\min_{H_{M+1},L_{M+1}} \rho_{M+1}^+ + \sum_{j\in \mathcal{S}_{M+1}}\gamma_{j(M+1)}^+
\end{align}
subject to:
\begin{align}
\sigma(\hat{F}_{M+1}^+) < 1&
,\,\rho_{M+1}^+ < 1\\
\gamma_{j(M+1)}^+<1-\frac{1-\lambda_j^2}{1-\lambda_j^{+2}}\rho_j&\text{ for all } j\in \mathcal{S}_{M+1}
\end{align}
\end{subequations}
When a plug-in request is received from subsystem $M+1$, the following design procedure must be adopted: (i) if \eqref{eq:scalable_design} admits a solution and if, for all $i\in\mathcal{N}_{M+1}$, $\rho_{i}^+<1$ and $\sigma(\hat{F}_i)<1$, then allow the plug-in, otherwise deny it; (ii) properly initialize $P_{M+1}(T_{PnP})$.\\
The following corollary of Theorem \ref{prop:small-gain} addresses the step (ii) and guarantees convergence of the system matrices $P_i(k)$, $k=1,\dots,M+1$ to steady state solutions.
\begin{corollary}
\label{cor:small-gain-Plugin}
If Assumption \ref{ass:local-in} holds also for $i=M+1$ and if, after the plug-in event, Assumptions \ref{ass:small-gain-global} (i) and \ref{ass:small-gain-distributed} are verified,
then there exist $\bar{P}_i^+\geq 0$ for all $i=1,\dots,M$ such that, $P_i(k)\rightarrow \bar{P}_i^+$ as $k\rightarrow\infty$ if the following initialization is used: $P_i(T_{PnP})=\bar{P}_i$ for all $i=1,\dots,M$ and (a) $P_{M+1}(T_{PnP})=0$, or (b) if Assumption~\ref{ass:local-det1} holds for $i=M+1$, $P_{M+1}(T_{PnP})=\bar{P}_{M+1}^N$, or (c) if Assumption~\ref{ass:local-det2} holds for $i=M+1$, $P_{M+1}(T_{PnP})=\tilde{P}_{M+1}^N$.\hfill$\square$
\end{corollary}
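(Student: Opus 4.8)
The plan is to deduce the result from Theorem~\ref{prop:small-gain} applied to the augmented network of $M+1$ subsystems, the only genuine gap being that the corollary starts the iteration \eqref{eq:riccati_distr01} at $k=T_{PnP}$ from the mixed initialization $(\bar{P}_1,\dots,\bar{P}_M,P_{M+1}(T_{PnP}))$ rather than from one of the three homogeneous initializations (a)--(c) covered by the theorem. First I would verify that the enlarged system meets every hypothesis of Theorem~\ref{prop:small-gain}: Assumption~\ref{ass:local-in} holds for $i=1,\dots,M$ by the standing Assumption~\ref{ass:PnP} and for $i=M+1$ by hypothesis; Assumption~\ref{ass:small-gain-global}~(i) is assumed to hold after the plug-in; and Assumption~\ref{ass:small-gain-global}~(ii) follows from the assumed Assumption~\ref{ass:small-gain-distributed} through Proposition~\ref{propo:Gamma-distributed}. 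Hence Theorem~\ref{prop:small-gain} guarantees the existence of steady-state matrices $\bar{P}_i^+\geq 0$, $i=1,\dots,M+1$, and, shifting the time origin to $T_{PnP}$, the convergence of the iterates to $\bar{P}_i^+$ when the iteration is started from the all-zero initialization.

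The bridge from the homogeneous to the mixed initialization rests on the monotonicity of the distributed update map. Writing the right-hand side of \eqref{eq:riccati_distr01} in the equivalent form $\mathcal{F}_i(\mathbf{P})=\sum_{j\in\mathcal{N}_i^+}\varsigma_j^+\,\mathcal{R}(P_j,A_{ij},C_j,0,R_j)+Q_i$, I would invoke the standard fact that $P\mapsto\mathcal{R}(P,A,C,0,R)$ is monotone on the positive semidefinite cone, so that $\mathbf{P}\leq\mathbf{P}'$ implies $\mathcal{F}_i(\mathbf{P})\leq\mathcal{F}_i(\mathbf{P}')$ for every $i$. The lower comparison is immediate: the mixed initialization dominates the all-zero one componentwise (each $\bar{P}_i\geq 0$ and $P_{M+1}(T_{PnP})\in\{0,\bar{P}_{M+1}^N,\tilde{P}_{M+1}^N\}\geq 0$), whence the mixed trajectory lies above the all-zero trajectory, which already converges up to $\bar{\mathbf{P}}^+$. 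For the upper comparison I would exploit the plug-in structure recorded just before this subsection, namely $\varsigma_j^+\geq\varsigma_j$ and $\mathcal{N}_i^+\supseteq\mathcal{N}_i$ for all $i,j\leq M$: evaluating $\mathcal{F}_i$ at the mixed initialization and using $P_j(T_{PnP})=\bar{P}_j$ together with the old steady-state identity $\bar{P}_i=\sum_{j\in\mathcal{N}_i}\varsigma_j\,\mathcal{R}(\bar{P}_j,A_{ij},C_j,0,R_j)+Q_i$ shows $\mathcal{F}_i(\mathbf{P}(T_{PnP}))\geq\bar{P}_i=P_i(T_{PnP})$, since raising the weights to $\varsigma_j^+\geq\varsigma_j$ and possibly adjoining the extra positive semidefinite term indexed by $M+1$ can only increase the sum. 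For the new subsystem the same one-step increase follows from the local Riccati fixed-point relation defining $\tilde{P}_{M+1}^N$ (resp.\ $\bar{P}_{M+1}^N$), exactly as in the analysis of initialization (c) (resp.\ (b)) of Theorem~\ref{prop:small-gain}, while the choice $P_{M+1}(T_{PnP})=0$ is trivial.

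Combining the one-step increase $\mathbf{P}(T_{PnP}+1)\geq\mathbf{P}(T_{PnP})$ with the monotonicity of $\mathcal{F}$ propagates by induction to a monotonically nondecreasing sequence $\{\mathbf{P}(k)\}_{k\geq T_{PnP}}$. It then remains to bound this sequence from above and to identify its limit. Here I would use a supersolution furnished by the small-gain hypothesis: the arguments underlying Theorem~\ref{prop:small-gain} (and Lemma~\ref{lemma:stability} through condition \eqref{eq:Pi_bar}) provide matrices $\hat{P}_i$ satisfying $\mathcal{F}_i(\hat{\mathbf{P}})\leq\hat{P}_i$ and dominating the mixed initialization; monotonicity then gives $\mathbf{P}(k)\leq\hat{\mathbf{P}}$ for all $k$, so the nondecreasing bounded sequence converges to a fixed point $\mathbf{P}^\star$ of $\mathcal{F}$ with $\mathbf{P}^\star\geq\bar{\mathbf{P}}^+$ by the lower sandwich. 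The contraction structure underlying the small-gain argument of Theorem~\ref{prop:small-gain} ensures this fixed point is unique, so $\mathbf{P}^\star=\bar{\mathbf{P}}^+$ and $P_i(k)\to\bar{P}_i^+$ for all $i$. I expect the main obstacle to lie precisely in this last step: pinning down the uniform upper bound and the uniqueness of the fixed point from the small-gain data, rather than merely asserting boundedness, and in carefully justifying the one-step increase for those old subsystems whose successor count $\varsigma_i$ strictly grows, since their stored $\bar{P}_i$ is no longer stationary for the reconfigured iteration.
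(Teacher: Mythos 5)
Your core argument coincides with the paper's own proof: verify that the reconfigured network satisfies Assumptions \ref{ass:local-in} and \ref{ass:small-gain-global} (the latter via Proposition \ref{propo:Gamma-distributed}); establish the one-step increase $P_i(T_{PnP}+1)\geq P_i(T_{PnP})$ for the old subsystems from the scaling identity $\mathcal{R}(P,\tilde{A}^+_{ij},\tilde{C}^+_j,0,\tilde{R}^+_j)=\tfrac{\varsigma_j^+}{\varsigma_j}\mathcal{R}(P,\tilde{A}_{ij},\tilde{C}_j,0,\tilde{R}_j)$, the fact that $\varsigma_j^+\geq\varsigma_j$, the positive semidefiniteness of the extra $(M+1)$-indexed term, and the stationarity $P_i(T_{PnP})=\bar{P}_i$; handle $i=M+1$ by the same fixed-point comparisons as in cases (a)--(c) of Theorem \ref{prop:small-gain}; then propagate by induction using the monotonicity Lemma \ref{lemma:riccati:monoton}. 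Incidentally, the ``main obstacle'' you flag at the end — justifying the one-step increase for old subsystems whose successor count grows — is no obstacle at all: a larger weight $\varsigma_j^+$ only increases the right-hand side of the update, so it works in your favor, exactly as you computed.

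The genuine gap is in your final step. The supersolution $\hat{\mathbf{P}}$ with $\mathcal{F}_i(\hat{\mathbf{P}})\leq\hat{P}_i$ dominating the mixed initialization is asserted but never constructed, and the appeal to ``Lemma \ref{lemma:stability} through condition \eqref{eq:Pi_bar}'' is circular: \eqref{eq:Pi_bar} is a hypothesis of that lemma, not a conclusion, and the existence of such matrices for the reconfigured network is precisely the kind of property at stake. Likewise, the uniqueness of the fixed point claimed from the ``contraction structure'' is nowhere established in the paper. Both problems disappear once you notice that neither is needed: the corollary asserts only the existence of limits $\bar{P}_i^+\geq 0$, not that they coincide with the limits obtained from the all-zero initialization, so the lower sandwich and the identification $\mathbf{P}^\star=\bar{\mathbf{P}}^+$ can be discarded entirely. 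For the upper bound, the paper invokes Lemma \ref{lemma:riccati:boundedness_small-gain} directly: it guarantees $P_i(k)\leq P_i^{MAX}$ for \emph{every} initialization $P_i\geq 0$ — in particular the mixed one — whenever Assumptions \ref{ass:local-in} and \ref{ass:small-gain-global} hold for the post-plug-in network, which you already verified. Monotone nondecreasing plus bounded then yields convergence to some $\bar{P}_i^+\geq 0$, which is all the statement requires.
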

Note that the initializations (b) and (c) limit possible undesirable transients on the state estimates.
Note also that, at the $(M+1)$-th subsystem level, to solve \eqref{eq:scalable_design}, the required data consist in (i) the local system matrices $(A_{(M+1)(M+1)}, C_{M+1})$, (ii) the number $\varsigma_{M+1}$ of successors of subsystem $M+1$, (iii) $A_{(M+1)j}$, $A_{jj}$, $H_j$ for all $j\in\mathcal{N}_{M+1}$, (iv) $(1-\lambda_j^2)/(1-\lambda_j^{+2})\rho_j$, $A_{j(M+1)}$, $H_j$ for all $j\in \mathcal{S}_{M+1}$. It is therefore clear that this local design problem requires the transmission of a limited amount of information, i.e., through a neighbor-to-neighbor communication graph.\\
Also, remark that the optimization problem \eqref{eq:scalable_design} is a nonlinear one; to simplify it, an efficient strategy amounts, for example, to define $H_{M+1}$ as the matrix such that $\hat{F}_{M+1}$ is diagonal (i.e., in case $\hat{F}_{M+1}$ is diagonalizable and has real eigenvalues), making $H_{M+1}$ depend upon $\bar{L}_{M+1}$, or simply setting $H_{M+1}=I_{n_i}$. In this way we can reduce the number of free variables of the problem.
\subsection{Unplug of a subsystem}
\label{sec:unplug}
Assume that, without loss of generality, at step $T_{PnP}$, subsystem $M$ is unplugged. Note that,
\begin{itemize}
\item if $i\not\in\mathcal{S}_{M}\cup\mathcal{N}_{M}$, then $\rho_i^+=\sum_{j=1}^{M-1}\gamma_{ij}=\sum_{j=1}^{M}\gamma_{ij}=\rho_i<1$;
\item if $i\in\mathcal{S}_{M}$ but $i\not\in\mathcal{N}_{M}$, then $\rho_i^+=\sum_{j=1}^{M-1}\gamma_{ij}=\rho_i-\gamma_{iM}<1$;
\item if $i\in\mathcal{N}_{M}$ but $i\not\in\mathcal{S}_{M}$, then
$\varsigma_i^+=\varsigma_i-1$ and therefore $\bar{F}_i^+=\sqrt{(\varsigma_i-1)/\varsigma_i}\bar{F}_i$. From this it follows that $\mu_i^+=\mu_i$ but, at the same time, $\lambda_i^+=\sqrt{(\varsigma_i-1)/\varsigma_i}\lambda_i<\lambda_i$. Also $\gamma_{ij}^+=(1-\lambda_i^2)/(1-\lambda_i^{+2})\gamma_{ij}$ for all $j=1,\dots,M$. Therefore
$\rho_i^+=(1-\lambda_i^2)/(1-\lambda_i^{+2})\rho_i<\rho_i$;
\item $i\in\mathcal{S}_{M}\cap\mathcal{N}_{M}$, it follows that $\rho_i^+=\sum_{j=1}^{M-1}\gamma^+_{ij}=
    (1-\lambda_i^2)/(1-\lambda_i^{+2})(\rho_i-\gamma_{iM})<\rho_i$, in view of the fact that both $\lambda_i^+<\lambda_i$ and $\gamma^+_{iM}=0$.
\end{itemize}
In view of this, since Assumptions \ref{ass:small-gain-distributed} and~\ref{ass:small-gain-global} (i) hold before the unplug event, then they are guaranteed for the system deprived of the $M$-th subsystem. Therefore, any unplug request can be accepted, without hampering the convergence properties of the estimator.\\
The following corollary of Theorem \ref{prop:small-gain} guarantees convergence of the system matrices $P_i(k)$, $k=1,\dots,M-1$ to new steady state solutions.
\begin{corollary}
\label{cor:small-gain-unPlug}
After the un-plug event, there exist $\bar{P}_i^+\geq 0$ for all $i=1,\dots,M$ such that, $P_i(k)\rightarrow \bar{P}_i^+$ as $k\rightarrow\infty$ if the following initialization is used: $P_i(T_{PnP})=\bar{P}_i$ for all $i=1,\dots,M-1$.
\end{corollary}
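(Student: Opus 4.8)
The plan is to prove this corollary of Theorem~\ref{prop:small-gain} by a monotone-convergence argument for the distributed Riccati recursion~\eqref{eq:riccati_distr01}, working throughout with the explicit-weight form $P_i(k+1)=\sum_{j\in\mathcal{N}_i}\varsigma_j\,\mathcal{R}(P_j(k),A_{ij},C_j,0,R_j)+Q_i$ established in Section~\ref{sec:statement}. The initialization prescribed here, $P_i(T_{PnP})=\bar{P}_i$, is not one of the initializations (a)--(c) of Theorem~\ref{prop:small-gain}, so the theorem does not apply verbatim; instead I would exploit the fact that unplugging subsystem $M$ acts on the recursion in two purely \emph{shrinking} ways. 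For every $i\in\mathcal{S}_M$ the neighbor index set loses the element $M$, so $\mathcal{N}_i^+=\mathcal{N}_i\setminus\{M\}\subseteq\mathcal{N}_i$; and for every $j\in\mathcal{N}_M$ the successor count drops, so $\varsigma_j^+=\varsigma_j-1$, while $\varsigma_j^+=\varsigma_j$ otherwise. Hence $\varsigma_j^+\le\varsigma_j$ for all surviving $j$, and the recursion map of the reduced network is dominated, term by term, by that of the original network.

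The key step is to show that the first iterate after the event does not increase. By Assumption~\ref{ass:PnP} we have $P_i(T_{PnP})=\bar{P}_i$, the steady state of the \emph{original} recursion, so $\bar{P}_i=\sum_{j\in\mathcal{N}_i}\varsigma_j\,\mathcal{R}(\bar{P}_j,A_{ij},C_j,0,R_j)+Q_i$. Evaluating instead the \emph{reduced} recursion at the same point gives $P_i(T_{PnP}+1)=\sum_{j\in\mathcal{N}_i^+}\varsigma_j^+\,\mathcal{R}(\bar{P}_j,A_{ij},C_j,0,R_j)+Q_i$. Each summand $\mathcal{R}(\bar{P}_j,A_{ij},C_j,0,R_j)$ is positive semidefinite (it equals $A_{ij}(\bar P_j^{-1}+C_j^TR_j^{-1}C_j)^{-1}A_{ij}^T$ when $\bar P_j>0$, and is PSD in general), so dropping the index $M$ and replacing $\varsigma_j$ by the smaller $\varsigma_j^+$ can only delete or attenuate non-negative terms. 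Therefore $P_i(T_{PnP}+1)\le\bar{P}_i=P_i(T_{PnP})$ in the Loewner order, for every $i=1,\dots,M-1$.

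It then remains to propagate this ordering and pass to the limit. The map $P\mapsto\mathcal{R}(P,A,C,0,R)$ is monotone nondecreasing in $P$ (a standard Kalman/information-filter property, again visible from $(P^{-1}+C^TR^{-1}C)^{-1}$), so each $P_i(k+1)$ is a non-negative combination of monotone functions of $\{P_j(k)\}_j$. A straightforward induction with the base case of the previous paragraph yields $P_i(k+1)\le P_i(k)$ for all $k\ge T_{PnP}$; the reduced sequence is thus Loewner-monotone nonincreasing and bounded below by $0$. For any vector $v$ the scalar sequence $v^TP_i(k)v$ is nonincreasing and nonnegative, hence convergent, and by polarization every entry of $P_i(k)$ converges, so $P_i(k)\to\bar{P}_i^+$ for some symmetric $\bar{P}_i^+\ge0$, which is a fixed point of the reduced recursion by continuity (well defined since $R_j>0$). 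This is exactly the mirror image of the proof of Theorem~\ref{prop:small-gain} under initialization (a), where the sequence starting from $0$ is monotone \emph{nondecreasing}; here it is nonincreasing. Convergence is thus secured without invoking the small-gain condition, while the case analysis preceding the corollary, which shows that Assumptions~\ref{ass:small-gain-distributed} and~\ref{ass:small-gain-global}~(i) survive the unplug, guarantees that $\bar{P}_i^+$ remains a legitimate steady-state covariance bound in the sense of Lemma~\ref{lemma:stability}.

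The main obstacle I anticipate is the single-step inequality: one must separate and simultaneously control the two distinct monotone effects of the unplug---the \emph{combinatorial} shrinkage of $\mathcal{N}_i$ (loss of the neighbor $M$, relevant for $i\in\mathcal{S}_M$) and the \emph{numerical} shrinkage of the weights $\varsigma_j$ (relevant for $j\in\mathcal{N}_M$)---and verify that each acts as a decrease precisely because every Riccati summand is positive semidefinite. Once $P_i(T_{PnP}+1)\le\bar P_i$ is in place, the monotonicity of $\mathcal{R}$ and order-boundedness below by $0$ render the convergence routine.
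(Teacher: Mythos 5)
Your proposal is correct and follows essentially the same route as the paper's own proof: you establish the one-step decrease $P_i(T_{PnP}+1)\leq \bar{P}_i=P_i(T_{PnP})$ by exploiting the positive semidefiniteness of each Riccati summand together with the removal of neighbor $M$ and the reduction $\varsigma_j^+\leq\varsigma_j$ (the paper phrases this via $\mathcal{R}(P_j,\tilde{A}_{ij}^+,\tilde{C}_j^+,0,\tilde{R}_j^+)=\frac{\varsigma_j^+}{\varsigma_j}\mathcal{R}(P_j,\tilde{A}_{ij},\tilde{C}_j,0,\tilde{R}_j)$, which is the same computation in the scaled coordinates), and then you conclude by monotone induction (the paper's Lemma~\ref{lemma:riccati:monoton}) and the lower bound $\mathbf{P}(k)\geq 0$, correctly noting that no small-gain/boundedness argument is needed for a decreasing sequence. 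Your added details---the polarization argument for convergence of the matrix sequence and the observation that the limit is a fixed point of the reduced recursion---are elaborations of steps the paper leaves implicit, not a different method.
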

\subsection{Plug and play of transducers}
In many practical applications, the sensors embedded in a subsystem can be added, removed, or replaced. We consider that changes occur to the $M$-th subsystem for simplicity, but without loss of generality. Practically, this case consists in a change in the matrix $C_M$ (and, consequently, $\tilde{C}_{M}$), while the topology of the system and its dynamics remain unchanged. Therefore, for all $i\neq M$, $\rho_i^+=\rho_i$, since $\rho_i$, $i\neq M$, do not depend on $C_M$, but only on matrices $A_{ij}$ and on the number of successors, which remain unchanged.\\
On the other hand, focusing on subsystem $M$
\begin{itemize}
\item if a transducer is plugged in, this consists of adding a row (here denoted $c_{add}$) to matrix $C_M$, i.e.,
$$C_M^+=\begin{bmatrix}C_M\\c_{add}
\end{bmatrix}$$
This means that the detectability properties of the pairs $(A_{ii},C_i)$ and $(\tilde{A}_{ii},\tilde{C}_i)$ are not jeopardized by the plug-in event. Also, if $H_M$ remains unchanged and if we take $\bar{L}_M^+=\begin{bmatrix}\bar{L}_M&0\end{bmatrix}$, then $\rho_M^+=\rho_M$. This means that the addition of a new transducer does not compromize the convergence properties of the DKF scheme.
\item if a sensor is replaced or unplugged, this consists of a substantial variation of the matrix $C_M$. This means that, before to allow the PnP operation, one must verify the existence of a gain $\bar{L}_M$ such that the following are verified: (I) Schur stability of $\bar{F}_i$; (II) $\rho_M^+<1$. Concerning the latter, note that $H_M$ should remain unchanged, in order not to affect the values of $\rho_i$, $i\neq M$.
\end{itemize}
In the case considered in this section, however, it is not clear how the PnP operation impacts on the values of the matrices $P_i(k)$, $i=1,\dots, M$. In order to guarantee the convergence of the matrices $P_i(k)$ to a steady state, the following practical procedure can be adopted, suggested by Corollaries \ref{cor:small-gain-Plugin} and \ref{cor:small-gain-unPlug}: (a) for $k\geq T_{PnP}$, make $P_i(k)$, $i=1,\dots,M-1$ evolve as if subsystem $M$ were unplugged; (b) after convergence is achieved (say, at instant $T_{conv}$) make $P_i(k)$, $i=1,\dots,M$, $k>T_{conv}$ evolve as if subsystem $M$ were plugged-in at time $T_{conv}$, i.e., by setting $P_i(T_{conv}+1)=P_i(T_{conv})$, $i=1,\dots,M$ and $P_M(T_{conv}+1)=\tilde{P}_{M}^N$.
\section{Simulation results}
\label{sec:Exs}
In this section we provide some simulation results illustrating the application of DKF to two different examples, an academic one and the Hycon2 benchmark described in \cite{SR-GFT:12}, respectively.
\subsection{Academic example}
In this section we consider a system composed of interconnected subsystems. We set\\
$$
A_{ii}=\begin{bmatrix}0.9&0.1\\0.1&-0.9\end{bmatrix}\qquad \text{and} \qquad
C_i=\begin{bmatrix}1&1
\end{bmatrix}$$
and, for all $j\in\mathcal{N}_i$, $A_{ij}=$diag$(\alpha,-\alpha)$, where $\alpha>0$. Also $\mathbb{E}\left[w_i w_i^T\right]=Q_i=I_{2}$ and $\mathbb{E}\left[v_i v_i^T\right]=R_i=1$ where $I_{2}$ is the $2$-dimensional identity matrix.
\subsubsection{Dependence on coupling and centralized design}\hfill\\
First consider $M=2$, with $\mathcal{N}_1=\mathcal{N}_2=\{1,2\}$.
In Figure \ref{fig:rho} we show the relationship between the coupling strength $\alpha\in[0,6.5]$ and (\emph{i}) the spectral radius $\sigma(\Gamma)$ of matrix $\Gamma$; (\emph{ii}) the spectral radius of $\mathbf{A}-\bar{\mathbf{L}}\mathbf{C}$ obtained through the LMI-based design procedure sketched in Section \ref{subsec:convergence:LMI}. The latter procedure has given numerically reliable results for $\alpha\leq 6.5$.\\
To realize the upper plot, two different choices of $H_{i}$ are adopted: (I) such that $\hat{F}_{M+1}$ is diagonal; (II) $H_{i}=I_2$ for all $i=1,2$. As it is apparent, the spectral radius of $\Gamma$ does not significantly vary in the latter cases.\\
In both cases it is apparent that the small-gain procedure sketched in Section \ref{subsec:convergence:smallgain} is applicable when the coupling strength is sufficiently small. Also, from the lower panel it is apparent that $\sigma(\Gamma)<1$ is just sufficient to guarantee that $\sigma(\mathbf{A}-\bar{\mathbf{L}}\mathbf{C})<1$.
\begin{figure}
\centering
\includegraphics[width=0.5\textwidth]{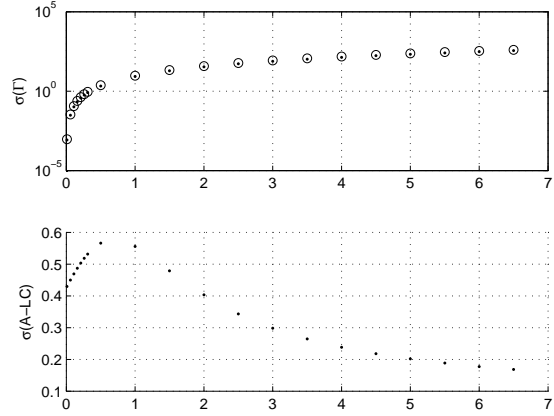}
\caption{Upper panel: values of $\sigma(\Gamma)$ as a function of $\alpha$ (setting $H_{i}$ such that $\hat{F}_{i}$ is diagonal - dots; setting $H_{i}=I_2$ - circles); lower panel: values of $\sigma(\mathbf{A}-\bar{\mathbf{L}}\mathbf{C})$ as a function of $\alpha$.}
\label{fig:rho}
\end{figure}

\subsubsection{Plug and play scenario}\hfill\\
Now, assume that $M=3$ and $\alpha=0.1$. At time $t=0$, $\mathcal{N}_1=\mathcal{N}_2=\{1,2\}$ and $\mathcal{N}_3=\{3\}$, i.e., subsystem 3 is not connected with the network. In this case
$$\Gamma=\begin{bmatrix}         0 &   0.1334 &        0\\
    0.1334   &      0   &      0\\
         0   &      0   &      0
\end{bmatrix}$$
Therefore, we have that $\rho_i<1$ for $i=1,2,3$.\\
At $t=100$ subsystem 3 plugs in and connects with subsystem 2. More specifically
$\mathcal{N}_1=\{1,2\}$, $\mathcal{N}_2=\{1,2,3\}$, and $\mathcal{N}_3=\{2,3\}$. In this case
$$\Gamma=\begin{bmatrix}         0 &   0.1334 &        0\\
    0.1535   &      0   &      0.1535\\
         0   &      0.0976   &      0
\end{bmatrix}$$
The plug-in request is accepted, since $\rho_i<1$ for $i=1,2,3$.\\
Finally, at $t=200$, subsystem 1 unplugs, meaning that $\mathcal{N}_1=\{1\}$ and $\mathcal{N}_2=\mathcal{N}_3=\{2,3\}$. As discussed in Section~\ref{sec:unplug}, the unplug request is automatically accepted, as it is witnessed by the values taken by the entries on $\Gamma$ in this case: $$\Gamma=\begin{bmatrix}         0 &   0 &        0\\
    0   &      0   &      0.1334\\
         0   &      0.0976   &      0
\end{bmatrix}$$
In Figure~\ref{fig:varspnp} the state trajectories are depicted, showing the different collective dynamical behaviours taken in correspondence with the different graph configurations.

\begin{figure}
\centering
\includegraphics[width=0.5\textwidth]{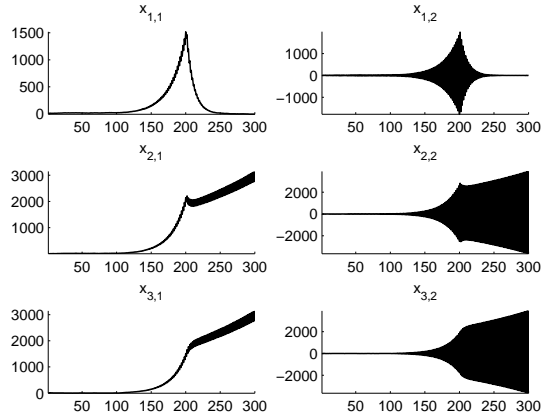}
\caption{State trajectories. $x_{i,k}$ denotes the $k$-th entry of $x_i$.}
\label{fig:varspnp}
\end{figure}

In Figure~\ref{fig:errspnp} the trajectories of the root mean estimation errors rmse$_i=\sqrt{1/n_s\sum_{t=1}^{n_s}\|x_i(t)-\hat{x}_i(t)\|^2}$ for all subsystems' states are depicted, showing that, in view of the proper matrix initializations, when plug and play operations occur the estimation error does not suffer from undesirable transients.
\begin{figure}
\centering
\includegraphics[width=0.5\textwidth]{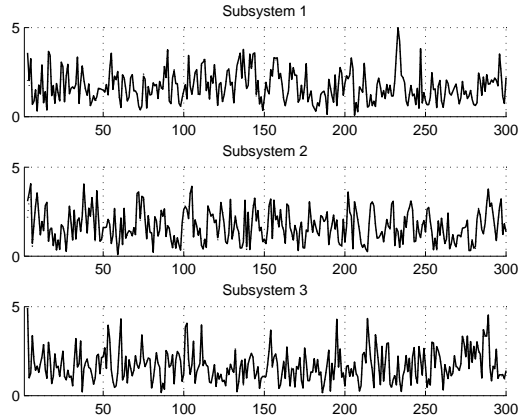}
\caption{Root mean estimation error for each subsystem, obtained with DKF - solid line - and centralized KF - dotted line (the lines are practically overlapping).}
\label{fig:errspnp}
\end{figure}

\subsection{Power network benchmark}
In this section we consider a power network system including a number of power generation areas coupled through tie-lines. This system has been adopted also in \cite{Negenborn-Kalman13} where the authors proposed a partition-based distributed estimation scheme tailored to power networks applications and exhibiting promising numerical results (although without any theoretical guarantees).\\
Our contributions are two-fold: firstly, in Section \ref{ex:comparison} we compare DKF with the centralized Kalman filter and the distributed strategy proposed in \cite{Negenborn-Kalman13}; secondly, in Section \ref{ex:PlugPlay} we test the PnP features of DKF in case a new subsystem is plugged in the network during its operation.\\
The dynamics of each power generation area, equipped with primary control and linearized around the equilibrium value for all variables, is described by the following continuous time LTI model \cite{SR-GFT:12}
\begin{equation}\label{eq:SysTurbine}
\dot{x}_i(t)=A^c_{ii}x_i(t)+B^c_i u_i +L^c_i \Delta P_{L_i} + \sum_{j \in \mathcal{N}_i} A^c_{ij} x_j
\end{equation}
where $x_i=(\Delta \theta_i, \Delta \omega_i, \Delta P_{m_i}, \Delta P_{v_i})$ is the state, $u_i=\Delta P_{ref_i}$ is the control input of each area, and $\Delta P_{L_i}$ is the local power load. Note that the letter $\Delta$ is used to denote the deviation from steady-state. The matrices of system \eqref{eq:SysTurbine} are
$$
A^c_{ii}=\left[
\begin{array}{cccc}
0 & 1 & 0 & 0 \\
-\frac{\sum_{j \in \mathcal{N}_i} P_{ij}}{2 H_i} & -\frac{D_i}{2 H_i} & \frac{1}{2 H_i} & 0 \\
0 & 0 & -\frac{1}{T_{t_i}} &  \frac{1}{T_{t_i}} \\
0 & -\frac{1}{R_i T_{g_i}} & 0 & - \frac{1}{T_{g_i}}
\end{array}
\right], \qquad
B_i^c = \left[
\begin{array}{c}
0\\
0\\
0\\
\frac{1}{T_{g_i}}
\end{array}
\right]
$$
$$
A^c_{ij}=\left[
\begin{array}{cccc}
0 & \,0\, & \,0 \, & \,0\, \\
\frac{P_{ij}}{2 H_i} & 0 & 0 & 0 \\
0 & 0 & 0 &  0 \\
0 & 0 & 0 & 0
\end{array}
\right], \qquad
L_i^c = \left[
\begin{array}{c}
0\\
-\frac{1}{2H_i}\\
0\\
0
\end{array}
\right]
$$
where the parameters and their numerical values are defined in \cite{SR-GFT:12}.
Since both $\Delta P_{ref_i}$ and $\Delta P_{L_i}$ are assumed to be constant and known, for the sake of simplicity, we neglect them from our analysis.\\
We discretize the process \eqref{eq:SysTurbine} with a sampling interval $T$ according to the technique proposed in \cite{FarinaAut2013}, leading to the discrete-time model \eqref{eq:subsystems00} where the matrices $A_{ii}$, $A_{ij}$ can be easily constructed from \eqref{eq:SysTurbine}. The matrix $C_i$ is
$$
C_{i}=\left[
\begin{array}{cccc}
1 & \,0\, & \,0 \, & \,0\, \\
0 & 1 & 0 & 0
\end{array}
\right]
$$
For $i \in \left\{1,\ldots,M\right\}$, $\mathbb{E}\left[w_i w_i^T\right]=Q_i=3\,I_{4}$ and $\mathbb{E}\left[v_i v_i^T\right]=R_i=I_{2}$ where $I_{k}$ is the $k$-dimensional identity matrix.

\subsubsection{Comparison test}\label{ex:comparison}\hfill\\
In this section we consider the scenario $1$ in \cite{SR-GFT:12}, where $M=4$ and where the adjacency matrix $Ad$, defining the neighboring relationships between areas, is
$$
Ad=\left[
\begin{array}{cccc}
0 & \,1\, & \,0 \, & \,0\, \\
1 & 0 & 1 & 0\\
0 & 1 & 0 & 1\\
0 & 0 & 1 & 0
\end{array}
\right]
$$
namely, $Ad_{ij} \neq 0$ if and only if $\frac{P_{ij}}{2 H_i} \neq 0$. In Figure \ref{fig:PrimaComponente} we depict $\Delta \theta_1$ and its estimate $\Delta \hat{\theta}_1$ generated by the DKF algorithm.\\
\begin{figure}
\centering
\includegraphics[width=0.5\textwidth]{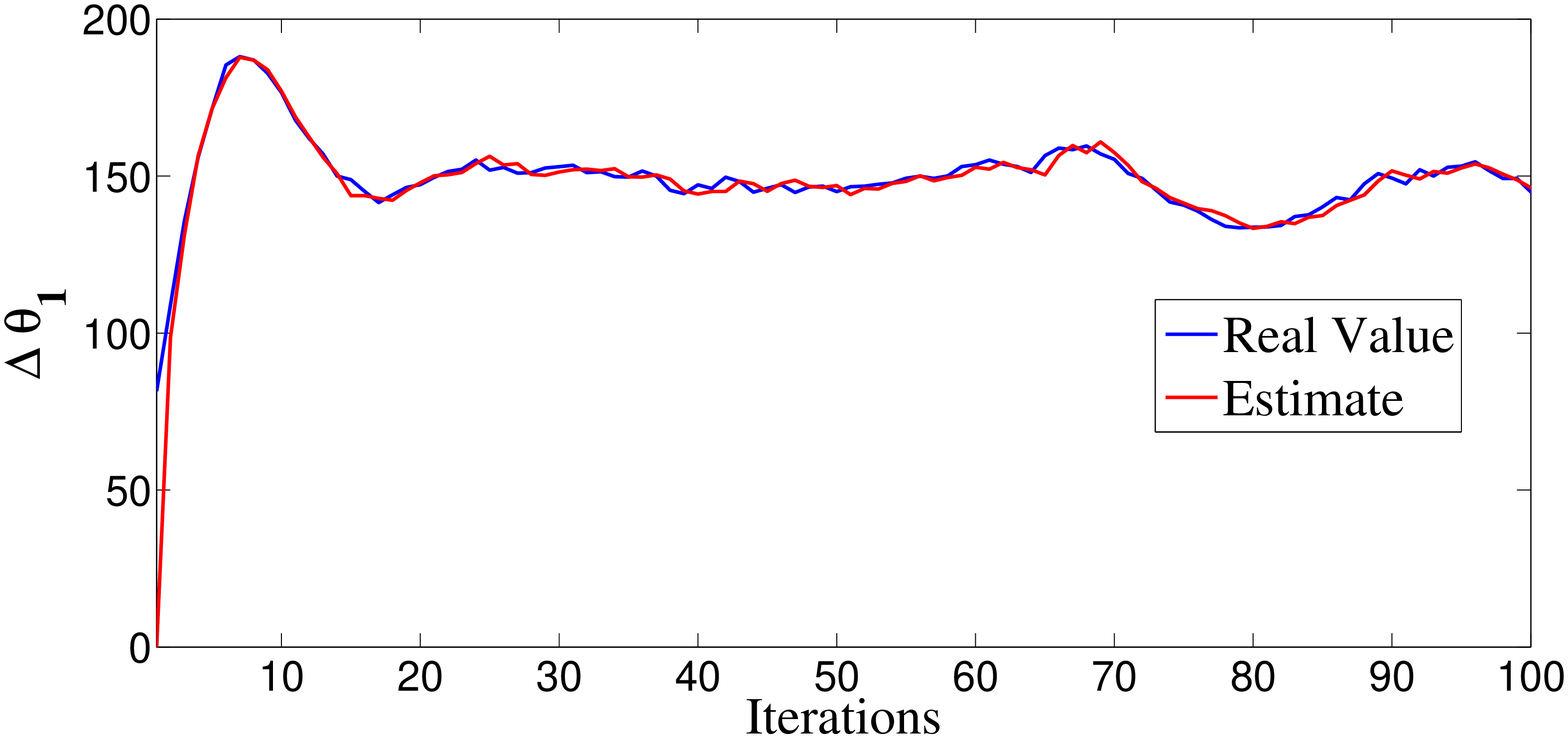}
\caption{Trajectory of $\Delta \theta_1$ (blue line) and its estimate (red line), obtained with DKF}
\label{fig:PrimaComponente}
\end{figure}
In Figures \ref{fig:Comparison} and \ref{fig:Comparison1} we compare the performance of DKF algorithm with that of the centralized Kalman predictor and of the distributed strategy proposed in \cite{Negenborn-Kalman13}. In Figure \ref{fig:Comparison} we plot the normalized estimation error $e(t)$ defined as
$$
e(t)= \frac{1}{\sqrt{M}} \|x(t)- \hat{x}(t) \|
$$
for the first $100$ iterations. In Figure \ref{fig:Comparison1} we plot $e(t)$ from $t=30$ up to $t=100$ (i.e., in stationary conditions).
\begin{figure}
\centering
\includegraphics[width=0.5\textwidth]{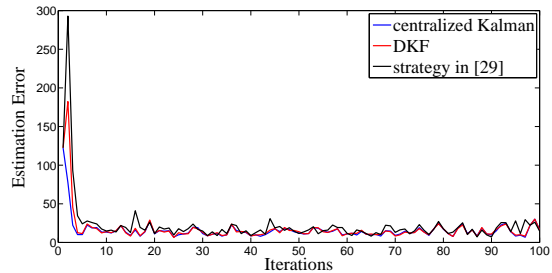}
\caption{Trajectory of $e(t)$ obtained with DKF (red line), with a centralized Kalman predictor (blue line), and with the method proposed in \cite{Negenborn-Kalman13} (black line), with $t\in[0,100]$.}
\label{fig:Comparison}
\end{figure}
\begin{figure}
\centering
\includegraphics[width=0.5\textwidth]{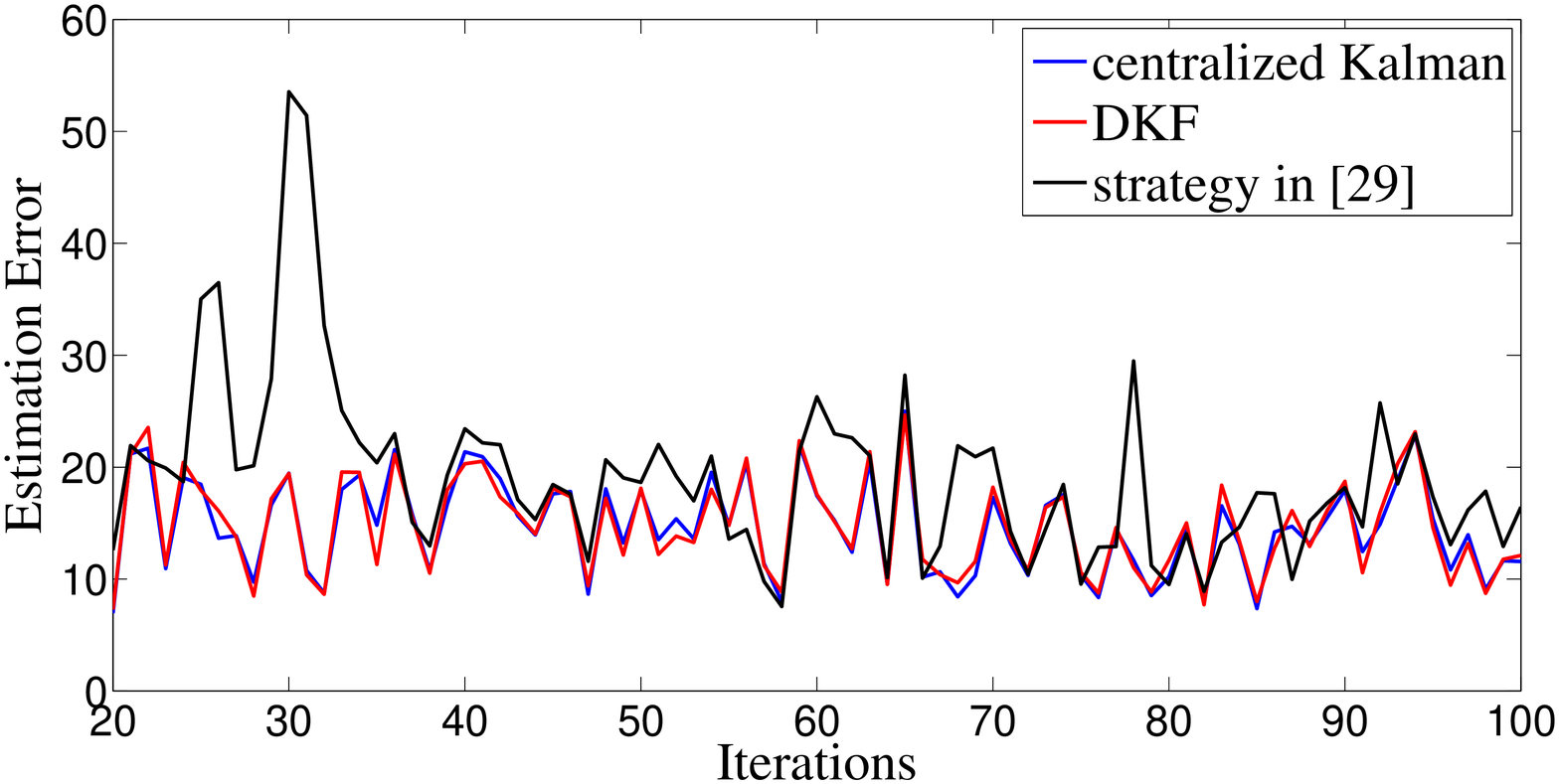}
\caption{Trajectory of $e(t)$ obtained with DKF (red line), with a centralized Kalman predictor (blue line), and with the method proposed in \cite{Negenborn-Kalman13} (black line), with $t\in[20,100]$.}
\label{fig:Comparison1}
\end{figure}
In the Table \ref{table:TabellaMedia} we report the average value of the estimation error evaluated between iteration $30$ and $100$.
\begin{center}
\begin{tabular}{r|c|c|c|}\label{table:TabellaMedia}
&Centralized&\,\,\,DKF\,\,\,& Strategy in [29]\\ \hline
Error Mean&13.74&14.08&17.21 \\ \hline
\end{tabular}
\end{center}
Notice that the performance of DKF algorithm is quite close to the performance of the centralized Kalman filter and that it outperforms the performance of the strategy in \cite{Negenborn-Kalman13}. Additionally Assumption \ref{ass:small-gain-distributed} is satisfied with $H_i=I_4$.

\subsubsection{Plug and play scenario}\label{ex:PlugPlay}\hfill\\
In this section we consider a PnP scenario. Specifically we assume that at time step $50$ a new area (i.e., area $5$) is added to the power network, and that, in particular, it gets connected to area $2$. Again the values of the parameters defining area $5$ can be found in \cite{SR-GFT:12}. The adjacency matrix describing the interconnections after step $50$ is
$$
Ad=\left[
\begin{array}{ccccc}
0 & \,1\, & \,0 \, & \,0 &\, 0\\
1 & 0 & 1 & 0 & 1\\
0 & 1 & 0 & 1 & 0\\
0 & 0 & 1 & 0 & 0\\
0 & 1 & 0 & 0 & 0
\end{array}
\right]
$$
In Figure \ref{fig:PlugPlay} we depict the behavior of the estimation error for both the centralized Kalman filtering algorithm and DKF. Observe that, also in this plug and play scenario the performance of the DKF algorthm is comparable with that of the centralized Kalman filter.
As expected, when a new area is added to the network the value of $e(t)$ increases mainly due to the poor estimation quality concerning the state of the area $5$. However, after few iterations the value of $e(t)$ settles around a value which is comparable to its value before the addition of the new area.

\begin{figure}
\centering
\includegraphics[width=0.5\textwidth]{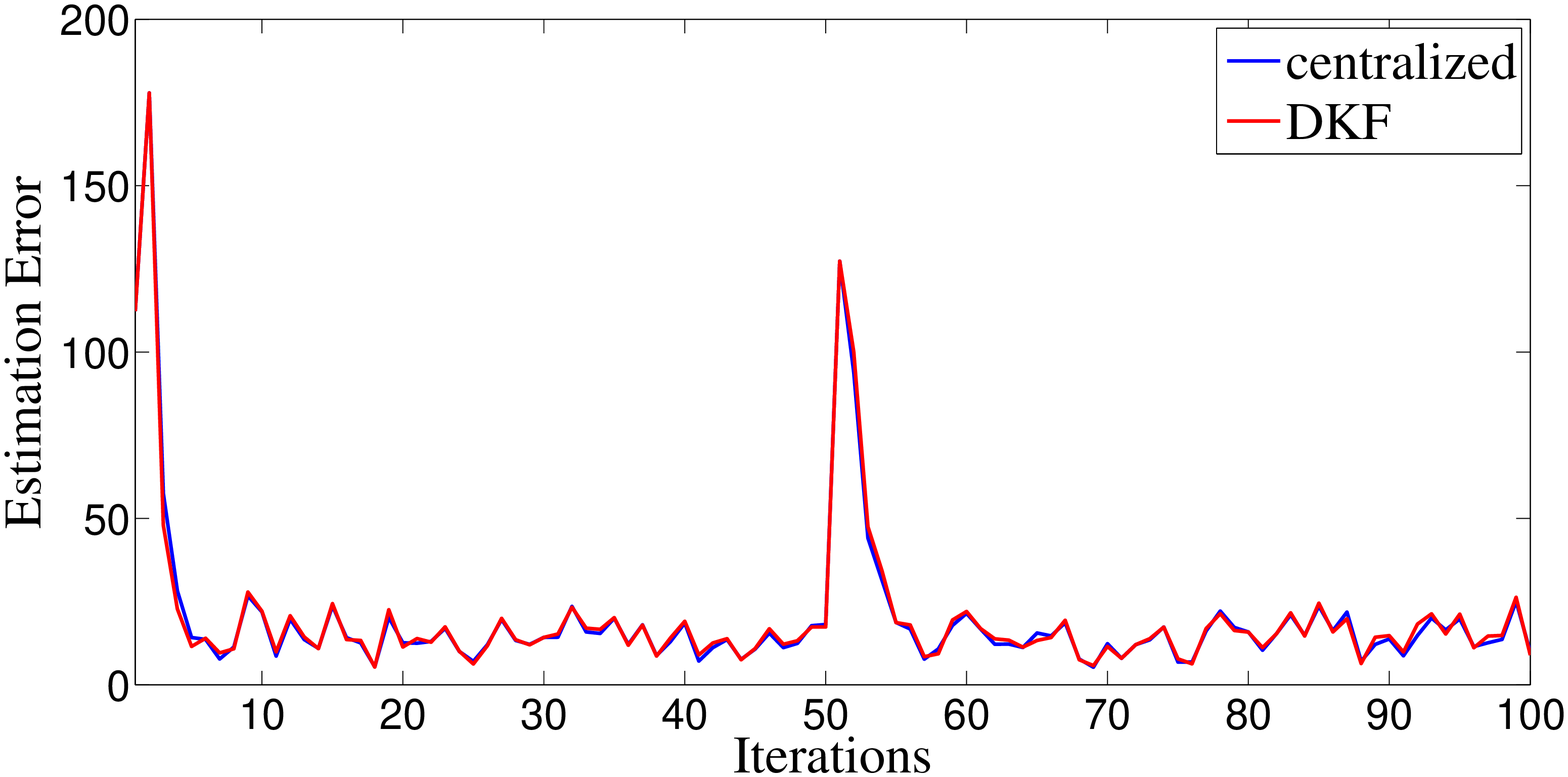}
\label{fig:PlugPlay}
\caption{Trajectory of $e(t)$ obtained with DKF (red line) and with a centralized Kalman predictor (blue line) in the PnP scenario.}
\end{figure}
\section{Conclusions}
\label{sec:conclusions}
In this paper a novel partition-based distributed observer based Kalman filter, named DKF, is proposed. The main advantages of the discussed state estimator are: (\emph{i}) scalability, in terms of both computational and communication loads required for the online operations; (\emph{ii}) the convergence properties can be proved under mild conditions; (\emph{iii}) distributed and plug and play design are allowed. In fact, not only centralized (although aggregate) but also distributed conditions for estimation convergence are given, which confer reconfigurability to the proposed estimation scheme. Simulation tests are provided to illustrate the effectiveness of DKF. For example, we have considered a well-known benchmark example, proposed in the framework of the Hycon2 Project.
Future work include the application of DKF to a real test case, e.g. smart grids.
\appendix
\section{Proofs}
\label{app:proofs}
The following preliminary result is needed for the proofs of both Lemma \ref{lemma:stability} and Proposition \ref{prop:stability}.

\begin{lemma}
Define $\mathbf{P}(k)=\mathrm{diag}(P_1(k),\dots,P_M(k))$. If $P_i(k)$ $i=1,\dots,M$ are updated according to~\eqref{eq:riccati_distr01}, then
\begin{align}
\mathbf{P}(k+1)\geq\mathcal{R}(\mathbf{P}(k),\mathbf{A},\mathbf{C},\mathbf{Q},\mathbf{R})\label{eq:riccati_ineq00}.
\end{align}
\label{prop:distr}
\end{lemma}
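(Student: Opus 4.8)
The plan is to reduce the asserted inequality to a block-wise Cauchy--Schwarz (Jensen) estimate. First I would rewrite the right-hand side in its ``gain'' form. Since $\mathbf{P}(k)=\mathrm{diag}(P_1(k),\dots,P_M(k))$, $\mathbf{C}$ and $\mathbf{R}$ are block diagonal, the optimal gain associated with $\mathcal{R}(\mathbf{P}(k),\mathbf{A},\mathbf{C},\mathbf{Q},\mathbf{R})$, namely $\mathbf{L}(k)=\mathbf{A}\mathbf{P}(k)\mathbf{C}^T(\mathbf{C}\mathbf{P}(k)\mathbf{C}^T+\mathbf{R})^{-1}$, has $(i,j)$ block exactly $L_{ij}(k)=A_{ij}P_j(k)C_j^T(C_jP_j(k)C_j^T+R_j)^{-1}$, i.e. it coincides with the distributed gain of \eqref{eq:gain_distributed}. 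Using the completion-of-squares identity already exploited in \eqref{eq:riccati_centr00}, I would then write
\[
\mathcal{R}(\mathbf{P}(k),\mathbf{A},\mathbf{C},\mathbf{Q},\mathbf{R})=(\mathbf{A}-\mathbf{L}(k)\mathbf{C})\mathbf{P}(k)(\mathbf{A}-\mathbf{L}(k)\mathbf{C})^T+\mathbf{L}(k)\mathbf{R}\mathbf{L}(k)^T+\mathbf{Q}.
\]

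Next I would expose the block structure. Writing $\Phi_j=I-P_j(k)C_j^T(C_jP_j(k)C_j^T+R_j)^{-1}C_j$, the $(i,j)$ block of $\mathbf{A}-\mathbf{L}(k)\mathbf{C}$ is $A_{ij}\Phi_j$, so the right-hand side expands as $\mathbf{Q}$ plus a sum over the ``source'' index $j$ of two outer products of block-columns: for each $j$ let $\psi_j$ be the block-column whose $i$-th block is $A_{ij}\Phi_j P_j(k)^{1/2}$ and $\chi_j$ the block-column whose $i$-th block is $L_{ij}(k)R_j^{1/2}$. Since $A_{ij}\neq0$ (equivalently $L_{ij}(k)\neq0$) precisely when $i\in\mathcal{S}_j$, both $\psi_j$ and $\chi_j$ have only $\varsigma_j=|\mathcal{S}_j|$ nonzero blocks, and
\[
\mathcal{R}(\mathbf{P}(k),\mathbf{A},\mathbf{C},\mathbf{Q},\mathbf{R})=\mathbf{Q}+\sum_{j=1}^{M}\bigl(\psi_j\psi_j^T+\chi_j\chi_j^T\bigr).
\]

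The crux is a block Cauchy--Schwarz bound that trades each full (non-block-diagonal) outer product for a block-diagonal overestimate. For any block-column $u$ with at most $\varsigma$ nonzero blocks and any conformably partitioned vector $z=(z_1,\dots,z_M)$ one has $z^Tuu^Tz=\|\sum_i (u^{(i)})^Tz_i\|^2\le \varsigma\sum_i\|(u^{(i)})^Tz_i\|^2$, that is $uu^T\le \varsigma\,\mathrm{diag}_i\bigl(u^{(i)}(u^{(i)})^T\bigr)$. Applying this to each $\psi_j,\chi_j$ with $\varsigma=\varsigma_j$ and summing yields a block-diagonal upper bound whose $(i,i)$ block is $Q_i+\sum_{j\in\mathcal{N}_i}\varsigma_j\bigl(A_{ij}\Phi_jP_j(k)\Phi_j^TA_{ij}^T+L_{ij}(k)R_jL_{ij}(k)^T\bigr)$. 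Recognizing the summand as $\varsigma_j\,\mathcal{R}(P_j(k),A_{ij},C_j,0,R_j)$ (again by completion of squares, now at the single-block level), this $(i,i)$ block is exactly $P_i(k+1)$ from \eqref{eq:riccati_distr01}, so the bound is precisely $\mathbf{P}(k+1)$, establishing \eqref{eq:riccati_ineq00}.

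The main obstacle---and the step that explains both the conservatism of the scheme and the appearance of the factor $\varsigma_j$ in the distributed recursion---is the block Cauchy--Schwarz estimate: it is what discards the cross-blocks $(i,m)$, $i\ne m$, of each $\psi_j\psi_j^T$ and turns the coupled centralized update into decoupled per-subsystem updates, with the number of successors $\varsigma_j$ being exactly the price paid for distributing the contribution of subsystem $j$ to each of its $\varsigma_j$ successors.
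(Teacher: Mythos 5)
Your proof is correct and is essentially the paper's own argument: the paper too reduces the claim to the block Cauchy--Schwarz estimate with weight $\varsigma_j$ over the source index $j$ (derived there from $w_{rj}^Tw_{sj}\le\tfrac{1}{2}(\|w_{rj}\|^2+\|w_{sj}\|^2)$ applied to $w_{ij}=(P_j^F(k))^{1/2}A_{ij}^Tv_i$), which is exactly your inequality $uu^T\le\varsigma\,\mathrm{diag}_i\bigl(u^{(i)}(u^{(i)})^T\bigr)$ read through quadratic forms. The only difference is bookkeeping: the paper works directly with the filtered form $\mathcal{R}(\mathbf{P}(k),\mathbf{A},\mathbf{C},\mathbf{Q},\mathbf{R})=\mathbf{A}\mathbf{P}^F(k)\mathbf{A}^T+\mathbf{Q}$, so it needs only one family of outer products, whereas you carry the Joseph/gain form with the two families $\psi_j,\chi_j$ and recombine them by completion of squares at the block level at the end.
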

\emph{Proof of Lemma~\ref{prop:distr}}\\
Since $\mathbf{P}(k)$ is block-diagonal define
$$\begin{array}{lcl}\mathbf{P}^F(k)&=&\mathbf{P}(k)-\mathbf{P}(k)\mathbf{C}^T(\mathbf{C}\mathbf{P}(k)\mathbf{C}^T+\mathbf{R})^{-1}\mathbf{C}\mathbf{P}(k)\\
&=& \text{diag}(P^F_1(k),\dots,P^F_M(k))\end{array}$$
where
\begin{equation}\label{eq:PF_def}P^F_i(k)=P_i(k)-P_i(k)C_i^T(C_iP_i(k)C_i^T+R_i)^{-1}C_iP_i(k)\geq 0\end{equation}
for all $i=1,\dots,M$. Furthermore, it holds that
\begin{equation}\mathbf{A}\mathbf{P}^F(k)\mathbf{A}^T+\mathbf{Q}\leq \text{diag}(P_1(k+1),\dots,P_M(k+1))\label{eq:Riccati_ineq}\end{equation}
where
\begin{equation}P_i(k+1)=\sum_{j=1}^M \varsigma_j A_{ij}P_j^F(k)A_{ij}^T + Q_i\label{eq:Riccati_ineq_part2}\end{equation}
which is equivalent to \eqref{eq:riccati_distr01}. Inequality \eqref{eq:Riccati_ineq} can be proved as follows. Define a vector $\mathbf{v}=(v_1,\dots,v_M)$, where $v_{i}\in\mathbb{R}^{n_i}$ for all $i=1,\dots,M$. We compute that $\mathbf{v}^T\mathbf{A}\mathbf{P}^F(k)\mathbf{A}^T\mathbf{v}=$
\begin{equation}\begin{array}{lcl}&=&\begin{bmatrix}\sum_{i=1}^M v_i^TA_{i1}&\dots&\sum_{i=1}^M v_i^TA_{iM}\end{bmatrix}\mathbf{P}^F(k)
\begin{bmatrix}\sum_{i=1}^M A_{i1}^Tv_i\\
\vdots\\
\sum_{i=1}^M A_{iM}^T v_i\end{bmatrix}\\
&=&\sum_{j=1}^M\left(\sum_{i=1}^M w_{ij}^T \sum_{i=1}^Mw_{ij}\right)\end{array}\label{eq:proof_7}\end{equation}
where $w_{ij}=\sqrt{P_j^F(k)}A_{ij}^Tv_i$. Remark that $w_{ij}=0$ identically iff $A_{ij}=0$, and that the number of nonzero vectors $w_{\cdot\, j}$ is equal to $\varsigma_j$. We compute that
$\sum_{i=1}^M w_{ij}^T \sum_{i=1}^Mw_{ij}=\sum_{r,s\in{\mathcal{S}}_j} w_{rj}^Tw_{sj}$. Note that, since $\|w_{sj}-w_{rj}\|^2\geq 0$, $w_{rj}^Tw_{sj}\leq \frac{1}{2}(w_{rj}^Tw_{rj}+w_{sj}^Tw_{sj})$. Therefore
$\sum_{r,s\in{\mathcal{S}}_j} w_{rj}^Tw_{sj}\leq \frac{1}{2}\sum_{r,s\in{\mathcal{S}}_j}(\|w_{rj}\|^2+\|w_{sj}\|^2)=\varsigma_j \sum_{i\in{\mathcal{S}}_j}\|w_{ij}\|^2= \sum_{i\in\mathcal{S}_j}\|v_{i}\|^2_{\varsigma_j A_{ij}P_j^F(k)A_{ij}^T}$.\\
From this, it follows that
$$\begin{array}{ll}&\sum_{j=1}^M\left(\sum_{i=1}^M w_{ij}^T \sum_{i=1}^Mw_{ij}\right)\leq \sum_{j=1}^M \sum_{i=1}^M  \|v_{i}\|^2_{\varsigma_j A_{ij}P_j^F(k)A_{ij}^T}\\
&=
  \sum_{i=1}^M  \|v_{i}\|^2_{\sum_{j=1}^M \varsigma_j A_{ij}P_j^F(k)A_{ij}^T}\\
  &=\mathbf{v}^T \text{diag}
  (\sum_{j=1}^M \varsigma_j A_{1j}P_j^F(k)A_{1j}^T,\dots,
  \sum_{j=1}^M \varsigma_j A_{Mj}P_j^F(k)A_{Mj}^T) \mathbf{v}\end{array}$$
  from which \eqref{eq:Riccati_ineq} readily follows.\hfill{}$\square$
\medskip
\emph{Proof of Lemma~\ref{lemma:stability}}\\
From Lemma~\ref{prop:distr}, one has that
\begin{align}
\bar{\mathbf{P}}\geq (\mathbf{A}-\bar{\mathbf{L}}\mathbf{C})\bar{\mathbf{P}}(\mathbf{A}-\bar{\mathbf{L}}\mathbf{C})^T+\mathbf{Q}+\bar{\mathbf{L}}\mathbf{R}\bar{\mathbf{L}}^T
\label{eq:riccati_ineq_01}
\end{align}
where the block entries of $\bar{\mathbf{L}}$ are $\bar{L}_{ij}=\mathcal{L}(\bar{P}_j,A_{ij},C_j,R_j)$, which is equivalent to $\bar{\mathbf{L}}=\mathcal{L}(\bar{\mathbf{P}},\mathbf{A},\mathbf{C},\mathbf{R})$. The latter follows from the fact that $\bar{\mathbf{L}}=\mathbf{A}\text{diag}(L_1^F,\dots,L_M^F)$, where
$\text{diag}(L_1^F,\dots,L_M^F)=\bar{\mathbf{P}}\mathbf{C}^T(\mathbf{C}\bar{\mathbf{P}}\mathbf{C}^T+\mathbf{R})^{-1}$,
which is block-diagonal in view of the block-diagonality of $\mathbf{C}$, $\bar{\mathbf{P}}$, and $\mathbf{R}$.
Assume, by contradiction, that $(\mathbf{A}-\bar{\mathbf{L}}\mathbf{C})$ is not Schur stable. Therefore, there is at least an eigenvalue/eigenvector pair $\lambda,v$ of $(\mathbf{A}-\bar{\mathbf{L}}\mathbf{C})$ such that $(\mathbf{A}-\bar{\mathbf{L}}\mathbf{C})^Tv=\lambda v$ and $|\lambda|\geq 1$. From \eqref{eq:riccati_ineq_01}
$$v^T\bar{\mathbf{P}}v \geq v^T(\mathbf{A}-\bar{\mathbf{L}}\mathbf{C})\bar{\mathbf{P}}(\mathbf{A}-\bar{\mathbf{L}}\mathbf{C})^T v+v^T\mathbf{Q}v+v^T\bar{\mathbf{L}}\mathbf{R}\bar{\mathbf{L}}^Tv
$$
from which it follows that
$(1-|\lambda|^2) v^T\bar{\mathbf{P}}v \geq v^T\mathbf{Q}v+v^T\bar{\mathbf{L}}\mathbf{R}\bar{\mathbf{L}}^Tv$.
Since the right hand side of the latter inequality is $\geq 0$ and $|\lambda|\geq 1$, the only possibility is that $|\lambda|=1$, $v^T\mathbf{Q}v=0$, and $\bar{\mathbf{L}}^Tv=0$. In view of this, $\mathbf{A}^T v=v$ and $\mathbf{G}^T v$ should hold at the same time which, recalling the PBH test, is in contradiction with the assumption that the pair $(\mathbf{A,G})$ is stabilizable. This concludes the proof of Lemma~\ref{lemma:stability}.\hfill$\square$

\medskip
\emph{Proof of Proposition \ref{prop:stability}}

\medskip

As a preliminary step, we show that, if $\Pi_d(1)\leq \mathbf{P}(1)$, then  $\Pi_d(k)\leq \mathbf{P}(k)$ for all $k\geq 0$. This can be proved using induction arguments. Assume that, at instant $k$, it holds that $\Pi_d(k)\leq \mathbf{P}(k)$. Recalling  Lemma \ref{prop:distr}, we have that
$\mathbf{P}(k+1)\geq \mathcal{R}(\mathbf{P}(k),\mathbf{A},\mathbf{C},\mathbf{Q},\mathbf{R})$,
where $\mathcal{R}(\mathbf{P}(k),\mathbf{A},\mathbf{C},\mathbf{Q},\mathbf{R})=(\mathbf{A}-\mathbf{L}(k)\mathbf{C})\mathbf{P}(k)(\mathbf{A}-\mathbf{L}(k)\mathbf{C})^T+\mathbf{L}(k)\mathbf{R}\mathbf{L}(k)^T+\mathbf{Q}$.
From this and \eqref{eq:prediction_error_collective_var} it results that
$$\mathbf{P}(k+1)-\Pi_d(k+1)\geq (\mathbf{A}-\mathbf{L}(k)\mathbf{C})(\mathbf{P}(k)-\Pi_d(k))(\mathbf{A}-\mathbf{L}(k)\mathbf{C})^T\geq 0$$
Therefore, $\Pi_d(k+1)\leq \mathbf{P}(k+1)$. By applying induction arguments, we can prove that $\Pi_d(k)\leq \mathbf{P}(k)$ for all $k\geq 0$.\\
If ${\mathbf{P}}(k)\rightarrow \bar{\mathbf{P}}$ as $k\rightarrow\infty$, then $\mathbf{L}(k)\rightarrow \bar{\mathbf{L}}$ such that, in view of Lemma \ref{lemma:stability}, $\mathbf{A}-\bar{\mathbf{L}}\mathbf{C}$ is Schur stable. Consider the evolution of matrix $\Pi_d(k)$. From the stability of $\mathbf{A}-\bar{\mathbf{L}}\mathbf{C}$, then $\Pi_d(k)\rightarrow \bar{\Pi}_d$, for all initial conditions $\Pi_d(1)$, where $\bar{\Pi}_d$ is the unique solution to the Lyapunov equation
$\bar{\Pi}_d=(\mathbf{A}-\bar{\mathbf{L}}\mathbf{C})\bar{\Pi}_d
(\mathbf{A}-\bar{\mathbf{L}}\mathbf{C})^T+\mathbf{Q}+
\bar{\mathbf{L}}\mathbf{R}\bar{\mathbf{L}}^T$.
If we set $\Pi_d(1)=0$, from the preliminary result then $\Pi_d(k)\leq \mathbf{P}(k)$ for all $k\geq 0$ and $\bar{\Pi}_d\leq \bar{\mathbf{P}}$. Noting that $\bar{\Pi}_d$ is the unique steady-state attained for all possible initial conditions the proof is concluded. \hfill$\square$

\medskip
The proof of Theorem \ref{prop:small-gain} heavily relies on classical results on Kalman filters, e.g., \cite{Caines-Mayne-Riccati,AndMooFilt,GoodwinRiccati-84,GoodwinRiccati-86}.
Similarly to well known results on the discrete-time Riccati equation, we need two intermediate results.

\begin{lemma}
\label{lemma:riccati:monoton}
If $P_j^A(k)\geq P_j^B(k)$ for all $j=1,\dots,M$, then $P_i^A(k+1)\geq P_i^B(k+1)$ where $P_i^A(k+1)$ and $P_i^B(k+1)$ are the matrix evolutions, obtained with \eqref{eq:riccati_distr01}, starting from $P_j^A(k)$ and $P_j^B(k)$, respectively.
\end{lemma}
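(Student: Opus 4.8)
The plan is to reduce the claim to the standard monotonicity property of the single-block Riccati map $\mathcal{R}$, and then exploit the additive structure of \eqref{eq:riccati_distr01}. Recall that \eqref{eq:riccati_distr01} can be written compactly as $P_i(k+1)=\sum_{j\in\mathcal{N}_i}\mathcal{R}(P_j(k),\tilde{A}_{ij},\tilde{C}_j,0,\tilde{R}_j)+Q_i$. Since $Q_i$ is a constant term independent of the $P_j(k)$, and a sum of terms ordered in the positive-semidefinite sense preserves that ordering, it suffices to show that each summand $\mathcal{R}(P_j(k),\tilde{A}_{ij},\tilde{C}_j,0,\tilde{R}_j)$ is monotonically nondecreasing in its first argument.

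First I would record the well-known \emph{completion of squares} representation of the Riccati map, namely that for any $P\geq 0$,
$$\mathcal{R}(P,A,C,Q,R)=\min_{K}\left[(A-KC)P(A-KC)^T+KRK^T+Q\right],$$
where the minimum over gain matrices $K$ is attained at $K=\mathcal{L}(P,A,C,R)$. This identity follows by expanding the quadratic in $K$ and completing the square, using invertibility of $CPC^T+R$ (here $\tilde{R}_j=\varsigma_j R_j>0$, with $\varsigma_j\geq 1$, guarantees this). The essential consequence is that for any fixed gain $K$ the bracketed expression is an upper bound for $\mathcal{R}(P,A,C,Q,R)$, with equality at the optimal gain.

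Next I would establish monotonicity of each summand. Fix $i$ and $j\in\mathcal{N}_i$, write $(A,C,R)=(\tilde{A}_{ij},\tilde{C}_j,\tilde{R}_j)$ for brevity, and let $K^A=\mathcal{L}(P_j^A(k),A,C,R)$ be the optimal gain associated with the larger matrix $P_j^A(k)$. Using the representation above with $Q=0$,
$$\mathcal{R}(P_j^A(k),A,C,0,R)=(A-K^AC)P_j^A(k)(A-K^AC)^T+K^AR(K^A)^T,$$
while, since $K^A$ is merely a feasible (suboptimal) gain for $P_j^B(k)$,
$$\mathcal{R}(P_j^B(k),A,C,0,R)\leq (A-K^AC)P_j^B(k)(A-K^AC)^T+K^AR(K^A)^T.$$
Subtracting, the gain-dependent term $K^AR(K^A)^T$ cancels, leaving
$$\mathcal{R}(P_j^A(k),A,C,0,R)-\mathcal{R}(P_j^B(k),A,C,0,R)\geq (A-K^AC)\bigl(P_j^A(k)-P_j^B(k)\bigr)(A-K^AC)^T\geq 0,$$
where the last inequality uses the hypothesis $P_j^A(k)\geq P_j^B(k)$ together with the fact that a congruence transformation preserves positive semidefiniteness. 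Summing over $j\in\mathcal{N}_i$ and adding $Q_i$ to both sides then yields $P_i^A(k+1)\geq P_i^B(k+1)$, as required.

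The only delicate point is the correct choice of gain in the comparison: one must use the gain optimal for the \emph{larger} argument $P_j^A(k)$ and treat it as suboptimal for $P_j^B(k)$, so that the subtraction produces a lower bound (rather than an upper bound) on the difference of the two Riccati maps. Everything else is routine, and the additive, block-diagonal structure of \eqref{eq:riccati_distr01} makes the extension from the single-block case immediate.
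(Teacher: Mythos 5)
Your proof is correct and takes essentially the same route as the paper's: both rewrite each summand $\mathcal{R}(P_j(k),\tilde{A}_{ij},\tilde{C}_j,0,\tilde{R}_j)$ as a minimum over gains of the quadratic form $(\tilde{A}_{ij}-K\tilde{C}_j)P_j(k)(\tilde{A}_{ij}-K\tilde{C}_j)^T+K\tilde{R}_jK^T$, then compare the two evolutions by evaluating this form at the gain optimal for the larger matrix (suboptimal for the smaller one) and using congruence, before summing over $j\in\mathcal{N}_i$ and adding $Q_i$. The only cosmetic difference is that you subtract the two bounds while the paper chains the inequalities, which is the same argument.
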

\begin{proof}
%
Note that we can write \eqref{eq:riccati_distr01} as
\begin{equation}
\label{eq:riccati_distr03}
\begin{array}{ll}
P_i(k+1)=&\sum_{j=1}^M (\tilde{A}_{ij}-{L}_{ij}(k)\tilde{C}_j)P_j(k)(\tilde{A}_{ij}-{L}_{ij}(k)\tilde{C}_j)^T\\
&+ {L}_{ij}(k)\tilde{R}_j{L}_{ij}(k)^T + Q_i\end{array}
\end{equation}
where, according to the classical Kalman filter theory, ${L}_{ij}(k)=\mathcal{L}(P_j(k),\tilde{A}_{ij},\tilde{C}_j,\tilde{R}_j)$ minimizes the term $(\tilde{A}_{ij}-{L}_{ij}(k)\tilde{C}_j)P_j(k)(\tilde{A}_{ij}-{L}_{ij}(k)\tilde{C}_j)^T+ {L}_{ij}(k)\tilde{R}_j{L}_{ij}(k)^T$ for all $i,j=1,\dots,M$. Therefore, consider the matrices $P_i^A$, $P_i^B$, where $P_i^A\geq P_i^B$ for all $i=1,\dots,M$, and optimal the gains ${L}_{ij}^A$ and ${L}_{ij}^B$ corresponding to $P_i^A$ and $P_i^B$, respectively, then for all $j$, $(\tilde{A}_{ij}-{L}_{ij}^B\tilde{C}_j)P_j^B(\tilde{A}_{ij}-{L}_{ij}^B
\tilde{C}_j)^T+ {L}_{ij}^B\tilde{R}_j({L}_{ij}^B)^T \leq (\tilde{A}_{ij}-{L}_{ij}^A\tilde{C}_j)P_j^B(\tilde{A}_{ij}-{L}_{ij}^A
\tilde{C}_j)^T+ {L}_{ij}^A\tilde{R}_j({L}_{ij}^A)^T \leq (\tilde{A}_{ij}-{L}_{ij}^A\tilde{C}_j)P_j^A(\tilde{A}_{ij}-{L}_{ij}^A
\tilde{C}_j)^T+ {L}_{ij}^A\tilde{R}_j({L}_{ij}^A)^T$,
and the proof is concluded.\hfill$\square$
\end{proof}

\begin{lemma}
\label{lemma:riccati:boundedness_small-gain}
If Assumptions \ref{ass:local-in} (for all $i=1,\dots,M$) and \ref{ass:small-gain-global} hold then, for all $P_i(1)\geq 0$ $i=1,\dots,M$,
there exist $P_i^{MAX}$ for all $i=1,\dots,M$ such that $P_i(k)\leq P_i^{MAX}$ for all $k\geq 0$.
\end{lemma}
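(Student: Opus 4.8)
The plan is to reduce the matrix recursion \eqref{eq:riccati_distr01} to a nonnegative scalar comparison system governed by $\Gamma$, and then to close the bound using $\sigma(\Gamma)<1$. First I would start from the equivalent form of \eqref{eq:riccati_distr01} already derived in the proof of Lemma~\ref{prop:distr}, namely $P_i(k+1)=\sum_{j}\varsigma_j A_{ij}P_j^F(k)A_{ij}^T+Q_i$ with $P_j^F(k)\geq 0$ as in \eqref{eq:PF_def} (see \eqref{eq:Riccati_ineq_part2}). The crucial algebraic step is to read each cross term through the neighbour's \emph{self-update}. Setting $D_i(k)=\varsigma_i A_{ii}P_i^F(k)A_{ii}^T=\tilde{A}_{ii}P_i^F(k)\tilde{A}_{ii}^T$, which is exactly the diagonal contribution to $P_i(k+1)$, and using that $A_{jj}$ is invertible (Assumption~\ref{ass:local-in}), for $j\neq i$ one has $\varsigma_j A_{ij}P_j^F(k)A_{ij}^T=(A_{ij}A_{jj}^{-1})D_j(k)(A_{ij}A_{jj}^{-1})^T$, so that
\begin{equation}
P_i(k+1)=D_i(k)+\sum_{j\neq i}(A_{ij}A_{jj}^{-1})D_j(k)(A_{ij}A_{jj}^{-1})^T+Q_i.\label{eq:planP}
\end{equation}

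Next I would bound the self-update by means of a stabilizing gain. Since $D_i(k)=\mathcal{R}(P_i(k),\tilde{A}_{ii},\tilde{C}_i,0,\tilde{R}_i)$, and recalling (as in \eqref{eq:riccati_distr03}) that the Kalman gain minimizes the associated quadratic form, the suboptimal choice $\bar{L}_i$ gives $D_i(k)\leq \bar{F}_i P_i(k)\bar{F}_i^T+\bar{L}_i\tilde{R}_i\bar{L}_i^T$. Substituting \eqref{eq:planP} shifted by one step for $P_i(k)$ and conjugating by $H_i$ (with $\hat{D}_i=H_iD_iH_i^T$, $\hat{F}_i=H_i\bar{F}_iH_i^{-1}$ and $\hat{A}_{ij}\hat{A}_{jj}^{-1}=H_iA_{ij}A_{jj}^{-1}H_j^{-1}$) yields a recursion that is closed in the transformed self-updates,
\begin{equation}
\hat{D}_i(k)\leq \hat{F}_i\Big(\hat{D}_i(k-1)+\sum_{j\neq i}(\hat{A}_{ij}\hat{A}_{jj}^{-1})\hat{D}_j(k-1)(\hat{A}_{ij}\hat{A}_{jj}^{-1})^T+\hat{Q}_i\Big)\hat{F}_i^T+\hat{Q}_i',\label{eq:planD}
\end{equation}
with constants $\hat{Q}_i=H_iQ_iH_i^T$ and $\hat{Q}_i'=H_i\bar{L}_i\tilde{R}_i\bar{L}_i^TH_i^T$.

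Then I would unroll the $\hat{F}_i(\cdot)\hat{F}_i^T$ term in \eqref{eq:planD} down to the initial condition, set $d_i(k)=\|\hat{D}_i(k)\|$, and take spectral norms using $\|\hat{F}_i^h\|\leq\mu_i\lambda_i^h$. The coupling terms are carried along as $\hat{F}_i^{h+1}(\cdot)(\hat{F}_i^T)^{h+1}$ and thus accumulate the geometric factor $\sum_{h\geq0}\mu_i^2\lambda_i^{2(h+1)}\leq \mu_i^2/(1-\lambda_i^2)$, so that whenever $d_j(m)\leq\delta_j$ for all $m<k$ one obtains $d_i(k)\leq \sum_{j\neq i}\gamma_{ij}\delta_j+c_i=(\Gamma\delta)_i+c_i$ for a constant $c_i\geq0$ absorbing the initial and noise contributions. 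Because $\Gamma\geq0$ entrywise and $\sigma(\Gamma)<1$, the Neumann series $(I-\Gamma)^{-1}=\sum_{n\geq0}\Gamma^n\geq I$ converges and is nonnegative; taking $\delta=(I-\Gamma)^{-1}c$ and arguing by strong induction on $k$ gives $d_i(k)\leq\delta_i$ for all $k$ (the base case uses $c_i\geq d_i(1)$ and $\delta\geq c$). Finally \eqref{eq:planP} yields $\|H_iP_i(k)H_i^T\|\leq d_i(k-1)+\sum_{j\neq i}\|\hat{A}_{ij}\hat{A}_{jj}^{-1}\|^2 d_j(k-1)+\|\hat{Q}_i\|$, a uniform bound $\beta_i$, whence $P_i(k)\leq P_i^{MAX}:=\beta_i H_i^{-1}H_i^{-T}$.

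The main obstacle is to make the coupling coefficients come out \emph{exactly} as $\gamma_{ij}$, rather than as $\gamma_{ij}$ inflated by a spurious factor $\|\hat{F}_i\|^2$ or $\|\hat{F}_j\|^2$; since $\sigma(\hat{F})<1$ does not control these norms, any such inflated condition need not follow from Assumption~\ref{ass:small-gain-global}, and the argument would break. This is precisely what dictates the two structural choices above: routing each cross term through $A_{jj}^{-1}$, so that a neighbour enters through its own \emph{stable} self-update $D_j$, and iterating $\hat{D}_i$ instead of $P_i$ directly, so that unrolling the stable dynamics and summing the geometric series reproduces $\mu_i^2/(1-\lambda_i^2)$ with no leftover norm factors. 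A secondary technical point is the bookkeeping of the strong induction, since \eqref{eq:planD}, once unrolled, couples time $k$ to the whole past through a convolution in $\hat{F}_i$.
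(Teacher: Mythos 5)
Your proposal is correct and follows essentially the same route as the paper's proof: the same decomposition into a self-update plus coupling terms routed through $A_{jj}^{-1}$ (your $D_i(k)$ is the paper's $P_i^L(k+1)-Q_i$), the same suboptimality bound via the stabilizing gain $\bar{L}_i$, the same conjugation by $H_i$, and the same unrolling with $\|\hat{F}_i^h\|\leq\mu_i\lambda_i^h$ leading to a scalar comparison system governed by $\Gamma$. The only differences are bookkeeping: you close the small-gain step explicitly with the Neumann series $(I-\Gamma)^{-1}=\sum_{n\geq 0}\Gamma^n$ and a strong induction, and you make explicit the final conversion to a positive-semidefinite bound $P_i^{MAX}=\beta_i H_i^{-1}H_i^{-T}$, whereas the paper cites a small-gain reference and leaves that last conversion implicit.
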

\begin{proof}
%
An alternative formulation of \eqref{eq:riccati_distr01} is
\begin{equation}
P_i(k)=P_i^L(k)+\Delta_i(k)\label{eq:riccati_distr02}
\end{equation}
where $P_i^L(k+1)
=\tilde{A}_{ii} P^F_i(k) \tilde{A}_{ii}^T+Q_i
=(\tilde{A}_{ii}-{L}_{ii}(k)\tilde{C}_i)P_i(k)(\tilde{A}_{ii}-{L}_{ii}(k)\tilde{C}_i)^T+{L}_{ii}(k)\tilde{R}_i{L}_{ii}^T(k)+Q_i
$ and $\Delta_i(k+1)=\sum_{j\neq i}\tilde{A}_{ij} P^F_j(k) \tilde{A}_{ij}^T$, being $P^F_j(k)$ defined in \eqref{eq:PF_def}.
In view of Assumption \ref{ass:local-in}, we can write $P^F_i(k-1)=\tilde{A}_{ii}^{-1}(P_i^L(k)-Q_i)(\tilde{A}_{ii}^{-1})^T$. Therefore
\begin{equation}
\label{eq:Delta_def}\Delta_i(k)=\sum_{j\neq i}\tilde{A}_{ij}\tilde{A}_{jj}^{-1}({P}_j^L(k)-Q_j)(\tilde{A}_{jj}^{-1})^T\tilde{A}_{ij}^T\end{equation}
Since $\bar{F}_i=(\tilde{A}_{ii}-\bar{L}_i \tilde{C}_i)$ is Schur stable (thanks to Assumption~\ref{ass:small-gain-global} (i)) and $\bar{L}_i$ is a suboptimal gain
$$P_{L}(k+1)\leq \bar{F}_i
(P_i^L(k)+\Delta_i(k))
\bar{F}_i^T+Q_i+\bar{L}_i\tilde{R}_i(\bar{L}_i)^T$$
Solving the latter we obtain:
\begin{align}
&P_i^L(k)\leq \bar{F}_i^{k-1} P_i^L(1) (\bar{F}_i^T)^{k-1}\label{eq:deltaP}\\
&+ \sum_{h=1}^{k-1}\bar{F}_i^{h-1}\left( \bar{F}_i\Delta_i(k-h)\bar{F}_i^T+ Q_i+\bar{L}_i\tilde{R}_i(\bar{L}_i)^T\right)(\bar{F}_i^T)^{h-1}\nonumber
\end{align}
Using the transformation matrices $H_i$, we define
$\hat{P}_i^L(k)=H_i P_i^L(k)H_i^T$, $\hat{Q}_i=H_i Q_i H_i^T$ $\hat{\hat{Q}}_i=H_i (Q_i+\bar{L}_i\tilde{R}_i(\bar{L}_i)^T) H_i^T$.
Note also that $H_i\tilde{A}_{ij}\tilde{A}_{jj}^{-1}H_j^{-1}=
H_i{A}_{ij}H_j^{-1}H_j{A}_{jj}^{-1}H_j^{-1}=\hat{A}_{ij}\hat{A}_{jj}^{-1}$.
In view of this and \eqref{eq:Delta_def}, we can rewrite \eqref{eq:deltaP} as follows
\begin{align}
&\hat{P}_i^L(k)\leq \hat{F}_i^{k-1} \hat{P}_i^L(1) (\hat{F}_i^T)^{k-1} + \sum_{h=1}^{k-1}\hat{F}_i^{h-1}\hat{\hat{Q}}_i(\hat{F}_i^T)^{h-1}\nonumber\\
&+\sum_{j\neq i} \sum_{h=1}^{k-1}\hat{F}_i^{h}
\hat{A}_{ij}\hat{A}_{jj}^{-1}(\hat{P}_j^L(k-h)-\hat{Q}_j)(\hat{A}_{jj}^{-1})^T\hat{A}_{ij}^T
(\hat{F}_i^T)^{h}
\label{eq:deltaPhat}
\end{align}
Recalling that ${P}_j^L(k)\geq {P}_j^L(k)-Q_j\geq 0$ for all $j=1,\dots,M$, we have that
\begin{align}
\|\hat{P}_i^L(k)\|\leq \|\hat{F}_i^{k-1}\|^2 \|\hat{P}_i^L(1)\|+\sum_{h=1}^{k-1}\|\hat{F}_i^h\|^2\|\hat{\hat{Q}}_i\|\nonumber\\
+
\sum_{j\neq i} \|\hat{A}_{ij}\hat{A}_{jj}^{-1}\|^2 \sum_{h=1}^{k-1}\|\hat{F}_i^h\|^2\|\hat{P}_j^L(k-h)\|
\end{align}
Therefore
\begin{align}\label{eq:pseudoISS}
\|\hat{P}_i^L(k)\|\leq \mu_i^2 \lambda_i^{2k}\|\hat{P}_i^L(1)\|+\frac{\mu_i^2}{1-\lambda_i^2}\|\hat{\hat{Q}}_i\|\nonumber\\
+ \sum_{j\neq i} \|\hat{A}_{ij}\hat{A}_{jj}^{-1}\|^2(\max_{h\in[0,k]}\|\hat{P}_j^L(h)\|) \frac{\mu_i^2}{1-\lambda_i^2}
\end{align}
Denoting $n_j(k) = \max_{h\in[0,k]}\|\hat{P}_j^L(h)\|$, \eqref{eq:pseudoISS} implies that
$0\leq n_i(k) \leq q_i + \sum_{j\neq i}\gamma_{ij} n_j(k)$
where $q_i=\mu_i^2 \|\hat{P}_i^L(1)\|+\frac{\mu_i^2}{1-\lambda_i^2}\|\hat{\hat{Q}}_i\|$ and $\gamma_{ij}=\frac{\mu_i^2}{1-\lambda_i^2}\|\hat{A}_{ij}\hat{A}_{jj}^{-1}\|^2$. Finally denote the vectors $\mathbf{n}(k)=(n_1(k),\dots,n_M(k))$ and $\mathbf{q}=(q_1,\dots,q_M)$. We obtain that
\begin{align}
(I_{M}-\Gamma)\mathbf{n}(k)\leq \mathbf{q}
\label{eq:nbold}
\end{align}
According to \cite{Wirth-Small-gain07}, if the spectral radius of $\Gamma$ is strictly smaller than one, for every initial condition (see, e.g., Lemma 13 for the general nonlinear case), the solution to the system \eqref{eq:riccati_distr02} exists and is uniformly bounded, since $\mathbf{q}$ does not depend on $k$.\hfill{}$\square$\\
\end{proof}

\medskip
Now we are in the position to provide the proof of Theorem \ref{prop:small-gain}.

\medskip

\emph{Proof of Theorem \ref{prop:small-gain}}

\medskip
First consider all the initializations a, b, and c.

\medskip
a. In case $P_i(1)=0$ for all $i=1,\dots,M$ then $P_i(2)\geq 0=P_i(1)$ for all $i=1,\dots,M$.

\medskip
b. Set $P_i(1)=\bar{P}_i^N$ for all $i=1,\dots,M$.
Note that matrices $\bar{P}_i^N$ exist and are unique for all $i=1,\dots,M$ in view of Assumption \ref{ass:local-det1}.
From \eqref{eq:riccati_distr01}, for all $i=1,\dots,M$
$$\begin{array}{lcl}
P_i(2)&=&\varsigma_i\mathcal{R}(\bar{P}_i^N,{A}_{ii},{C}_i,{Q}_i,{R}_i)+
\sum_{j\in\mathcal{N}_i\setminus\{i\}}
\varsigma_j\mathcal{R}(\bar{P}_j^N,{A}_{ij},{C}_j,0,{R}_j)\nonumber\\
&\geq&\mathcal{R}(\bar{P}_i^N,{A}_{ii},{C}_i,{Q}_i,{R}_i)=\bar{P}_i^N=P_i(1)
\end{array}$$

\medskip
c. Set $P_i(1)=\tilde{P}_i^N$ for all $i=1,\dots,M$.
Note that matrices $\tilde{P}_i^N$ exist and are unique (for all $i=1,\dots,M$) in view of Assumption \ref{ass:local-det2}.
Set $P_i(1)=\tilde{P}_i^N$ for all $i=1,\dots,M$. Then, from \eqref{eq:riccati_distr01}, for all $i=1,\dots,M$
$$\begin{array}{lcl}
P_i(2)&=&\mathcal{R}(\tilde{P}_i^N,\tilde{A}_{ii},\tilde{C}_i,{Q}_i,\tilde{R}_i)+
\sum_{j\in\mathcal{N}_i\setminus\{i\}}
\mathcal{R}(\tilde{P}_j^N,\tilde{A}_{ij},\tilde{C}_j,0,\tilde{R}_j)\nonumber\\
&\geq&\mathcal{R}(\tilde{P}_i^N,\tilde{A}_{ii},\tilde{C}_i,{Q}_i,\tilde{R}_i)=\tilde{P}_i^N=P_i(1)
\end{array}$$

\medskip
In all cases, applying induction arguments and in view of the monotonicity property (i.e., Lemma \ref{lemma:riccati:monoton}), $P_i(k+1)\geq P_i(k)$ for all $k\geq 1$ and for all $i=1,\dots,M$. Therefore the sequence of matrices $\mathbf{P}(k)=$diag$(P_1(k),\dots,P_M(k))$ is monotonically increasing, in the sense that $\mathbf{P}(k+1)\geq \mathbf{P}(k)$ for all $k$. In view of the boundedness property  (i.e., Lemma \ref{lemma:riccati:boundedness_small-gain}), there exist $\bar{P}_i$ for all $i$, such that $P_i(k)\rightarrow \bar{P}_i$ as $k\rightarrow\infty$.\hfill{}$\square$

\medskip
\emph{Proof of Proposition \ref{propo:Gamma-distributed}}\\
The proof easily follows from the Gershgorin circle theorem. Indeed, each eigenvalue of $\Gamma$ lies in at least one of the $M$ Gershgorin circles, i.e., since $\gamma_{ii}=0$ for all $i$, the values of $\lambda$ satisfying $|\lambda|\leq \rho_i=\sum_{j=1}^M |\gamma_{ij}|=\sum_{j=1}^M\gamma_{ij}$, for each $i=1,\dots,M$. Then, if $\rho_i<1$ for all $i=1,\dots,M$, all eigenvalues verify $|\lambda|<1$.\hfill$\square$

\medskip
\emph{Proof of Corollary \ref{cor:small-gain-Plugin}}\\
When plug-in events take place, if $j\leq M$ is a neighbor (also said \emph{predecessor} in \cite{RiversoFarinaGFT_PnP13}) of $M+1$, then $\varsigma_j^+=\varsigma_j+1$, otherwise $\varsigma_j^+=\varsigma_j$. In view of this, $\tilde{A}^+_{ij}=\sqrt{\frac{\varsigma_j+1}{\varsigma_j}}\tilde{A}_{ij}$ (for $i\in\mathcal{S}_{j}$), $\tilde{C}^+_{j}=\sqrt{\frac{\varsigma_j+1}{\varsigma_j}}\tilde{C}_{j}$, and $\tilde{R}^+_j=\frac{\varsigma_j+1}{\varsigma_j}\tilde{R}_j$ for all $j\in\mathcal{N}_{M+1}$; otherwise $\tilde{A}^+_{ij}=\tilde{A}_{ij}$, $\tilde{C}^+_{j}=\tilde{C}_{j}$, and $\tilde{R}^+_j=\tilde{R}_j$. Therefore, for all initializations and for all $i=1,\dots,M$
$$\begin{array}{lcl}
P_i(T_{PnP}+1)&=&\sum_{j\in\mathcal{N}_i^+}
\mathcal{R}(P_j(T_{PnP}),\tilde{A}_{ij}^+,\tilde{C}_j^+,0,\tilde{R}_j^+)+Q_i\\
&=&
\sum_{j\in\mathcal{N}_i}\frac{\varsigma_j^+}{\varsigma_j}
\mathcal{R}(P_j(T_{PnP}),\tilde{A}_{ij},\tilde{C}_j,0,\tilde{R}_j)+Q_i\\
&&+\mathcal{R}(P_{M+1}(T_{PnP}),\tilde{A}_{i(M+1)}^+,\tilde{C}_{M+1}^+,0,\tilde{R}_{M+1}^+)\\
&\geq& \sum_{j\in\mathcal{N}_i}
\mathcal{R}(\bar{P}_j,\tilde{A}_{ij},\tilde{C}_j,0,\tilde{R}_j)+Q_i=\bar{P}_i=P_i(T_{PnP})
\end{array}$$
\begin{itemize}
\item[a.] if $P_{M+1}(T_{PnP})=0$, $P_{M+1}(T_{PnP}+1)\geq 0$.
\item[b.] if $P_{M+1}(T_{PnP})=\bar{P}_{M+1}^N$,
    $P_{M+1}(T_{PnP}+1)\geq\mathcal{R}(\bar{P}_{M+1}^N,{A}_{(M+1)(M+1)},$ ${C}_{M+1},{Q}_{M+1},{R}_{M+1})=\bar{P}_{M+1}^N=P_{M+1}(T_{PnP})$,
\item[c.] if $P_{M+1}(T_{PnP})=\tilde{P}_{M+1}^N$,
    $P_{M+1}(T_{PnP}+1)\geq\mathcal{R}(\bar{P}_{M+1}^N,\tilde{A}_{(M+1)(M+1)}^+,$ $\tilde{C}_{M+1}^+,{Q}_{M+1},\tilde{R}_{M+1}^+)=\tilde{P}_{M+1}^N=P_{M+1}(T_{PnP})$
\end{itemize}
In all cases it follows that, for all $i=1,\dots,M+1$, $P_i(T_{PnP}+1)\geq P_i(T_{PnP})$. Applying an induction argument and in view of the monotonicity property, $P_i(k+1)\geq P_i(k)$ for all $k\geq T_{PnP}$ and for all $i=1,\dots,M$. Therefore the sequence of matrices $\mathbf{P}(k)=$diag$(P_1(k),\dots,P_M(k))$ is monotonically increasing, in the sense that $\mathbf{P}(k+1)\geq \mathbf{P}(k)$ for all $k\geq T_{PnP}$. Since Assumption~\ref{ass:small-gain-global} (ii) holds in view of Assumption~\ref{ass:small-gain-distributed} and Proposition~\ref{prop:distr}, the boundedness property holds in view of Lemma \ref{lemma:riccati:boundedness_small-gain}, and therefore there exist $\bar{P}_i^+$ for all $i=1,\dots,M+1$, such that $P_i(k)\rightarrow \bar{P}_i^+$ as $k\rightarrow\infty$.\hfill$\square$

\medskip
\emph{Proof of Corollary \ref{cor:small-gain-unPlug}}\\
When the unplug event takes place, if $j< M$ is a neighbor of $M$, then $\varsigma_j^+=\varsigma_j-1$, otherwise $\varsigma_j^+=\varsigma_j$. In view of this, $\tilde{A}^+_{ij}=\sqrt{\frac{\varsigma_j-1}{\varsigma_j}}\tilde{A}_{ij}$ if $i\in\mathcal{S}_{j}$, and $\tilde{C}^+_{j}=\sqrt{\frac{\varsigma_j-1}{\varsigma_j}}\tilde{C}_{j}$ and $\tilde{R}^+_j=\frac{\varsigma_j-1}{\varsigma_j}\tilde{R}_j$ for all $j\in\mathcal{N}_{M}$; otherwise $\tilde{A}^+_{ij}=\tilde{A}_{ij}$, $\tilde{C}^+_{j}=\tilde{C}_{j}$, and $\tilde{R}^+_j=\tilde{R}_j$.
Therefore, for all $i=1,\dots,M-1$
$$\begin{array}{lcl}
P_i(T_{PnP}+1)&=&\sum_{j\in\mathcal{N}_i^+}
\mathcal{R}(P_j(T_{PnP}),\tilde{A}_{ij}^+,\tilde{C}_j^+,0,\tilde{R}_j^+)+Q_i\\
&=&
\sum_{j\in\mathcal{N}_i^+}\frac{\varsigma_j^+}{\varsigma_j}
\mathcal{R}(P_j(T_{PnP}),\tilde{A}_{ij},\tilde{C}_j,0,\tilde{R}_j)+Q_i\\
&\leq &
\sum_{j\in\mathcal{N}_i^+}
\mathcal{R}(P_j(T_{PnP}),\tilde{A}_{ij},\tilde{C}_j,0,\tilde{R}_j)+Q_i\\
&= &
\sum_{j\in\mathcal{N}_i}
\mathcal{R}(P_j(T_{PnP}),\tilde{A}_{ij},\tilde{C}_j,0,\tilde{R}_j)+Q_i\\
&&-\mathcal{R}(P_M(T_{PnP}),\tilde{A}_{iM},\tilde{C}_M,0,\tilde{R}_M)\leq \bar{P}_i=P_i(T_{PnP})
\end{array}$$
Then, applying an induction argument and in view of the monotonicity property, $P_i(k+1)\leq P_i(k)$ for all $k\geq T_{PnP}$ and for all $i=1,\dots,M$. Therefore the sequence of matrices $\mathbf{P}(k)=$diag$(P_1(k),\dots,P_M(k))$ is monotonically decreasing, in the sense that $\mathbf{P}(k+1)\leq \mathbf{P}(k)$ for all $k\geq T_{PnP}$. In view of the fact that $\mathbf{P}(k)\geq 0$, there exist $\bar{P}_i^+$ for all $i=1,\dots,M+1$, such that $P_i(k)\rightarrow \bar{P}_i^+$ as $k\rightarrow\infty$.\hfill$\square$

\bibliographystyle{plain}        
\bibliography{DMHE_bib}          

\end{document}